\documentclass[11pt]{article}
%
%  May 17, 2017
%
\usepackage{amsfonts,amsthm,amsmath}

\newtheorem{Theorem}{Theorem}[section]
\newtheorem{Definition}{Definition}[section]
\newtheorem{Example}{Example}[section]
\newtheorem{Construction}{Construction}[section]
\newtheorem{Remark}{Remark}[section]
\newtheorem{Lemma}[Theorem]{Lemma}
\newtheorem{Corollary}[Theorem]{Corollary}
\newtheorem{Conjecture}{Conjecture}

\setlength{\textheight}{8.5in}
\setlength{\textwidth}{6.0in}
\addtolength{\topmargin}{-1.0in}
\addtolength{\oddsidemargin}{-0.5in}

\newcommand{\eff}{\ensuremath{\mathbb F}}

\title{Optimal Ramp Schemes and Related Combinatorial Objects}
\author{Douglas~R.~Stinson\thanks{Research supported by  NSERC discovery grant RGPIN-03882}\\
David R.\ Cheriton School of Computer Science\\ University of Waterloo\\
Waterloo, Ontario N2L 3G1, Canada}
\date{\today}

\begin{document}
\maketitle

\begin{abstract}
In 1996, Jackson and Martin \cite{JM1} proved that a strong ideal ramp scheme is
equivalent to an orthogonal array. However, there was no good characterization
of ideal ramp schemes that are not strong. Here we show the equivalence of
ideal ramp schemes to a new variant of orthogonal arrays that we term
\emph{augmented orthogonal arrays}. We give some constructions for these new
kinds of arrays, and, as a consequence, we also provide parameter situations where 
ideal ramp schemes exist but strong ideal ramp schemes do not exist. 
\end{abstract}

\section{Introduction}
\label{intro.sec}

Informally, a \emph{$(t,n)$ threshold scheme} (\cite{Blakley,Shamir}) is a method
of distributing secret information (called \emph{shares}) to $n$ players,
in such a way that any $t$ of the $n$ players can compute a predetermined \emph{secret}, but no 
subset of $t-1$ players can determine the secret.  The integer $t$ is called the 
\emph{threshold}; we assume that $1 \leq t \leq n$.
%(A more formal definition is given a bit later.) 

It is well-known that 
the number of possible shares in a threshold scheme must be greater than 
or equal to  the number of possible secrets.
If the number of possible secrets in a threshold scheme equals the
number of possible shares, the scheme is termed \emph{ideal}.

\emph{An  $(s,t,n)$ ramp scheme} (\cite{BM}) is a generalization of a threshold scheme in which 
there are two thresholds. The value $s$ is the \emph{lower threshold} and
$t$ is the \emph{upper threshold}. In a ramp scheme, any $t$ of the $n$ players can compute
the secret (exactly as in a $(t,n)$ threshold scheme). It is also required that
no subset of $s$ players can determine the secret. We note that a $(t-1,t,n)$ ramp scheme
is exactly the same thing as a $(t,n)$ threshold scheme. The parameters of a ramp scheme
satisfy the conditions $0 \leq s < t \leq n$.

A ramp scheme with $s < t-1$ possibly permits a larger number of possible secrets for a given number of
shares than is the case in a threshold scheme. If there are $v$ possible shares in 
an $(s,t,n)$ ramp scheme, then the number of possible secrets is bounded above by 
$v^{t-s}$. Of course, for $s < t-1$, it holds that $v^{t-s} > v$. When $s = t-1$,
the bound is equal to $v$, agreeing with the above-mentioned bound for threshold schemes.
If an  $(s,t,n)$ ramp scheme can be constructed with $v^{t-s}$ possible secrets (given $v$ possible shares), then
we say that the ramp scheme is \emph{ideal}. Thus, an ideal 
$(t-1,t,n)$ ramp scheme is the same thing as an ideal $(t,n)$ threshold scheme.

One of the very first constructions for threshold schemes, the Shamir threshold scheme \cite{Shamir},
yields ideal schemes. It is also well-known that ideal threshold schemes are equivalent to 
certain well-studied combinatorial structures, namely, orthogonal arrays and maximum distance
separable (MDS) codes \cite{Mar,DMR,BK}. 

There is less work on combinatorial characterizations of optimal 
ramp schemes. The main result in this direction is due to Jackson and Martin \cite[Theorem 9]{JM1}, 
who show that an ideal $(s,t,n)$ ramp scheme that satisfies certain additional conditions 
(they call such a scheme a \emph{strong} ramp scheme) is equivalent to an ideal
$(t,n+t-s-1)$ threshold scheme. This result is perhaps not completely satisfying because the
additional conditions used to define strong ramp schemes are rather restrictive. In \cite{JM1},
the authors ask if it is possible to construct ideal ramp schemes that are not strong. 
This is one of the open questions that we answer in this paper.

Our  approach is to define a new type of combinatorial structure that we term 
an \emph{augmented orthogonal array}, or \emph{AOA}. We prove that any optimal 
ramp scheme is  equivalent to a certain augmented orthogonal array.
This equivalence can be proven in a straightforward manner, analogous to the proof that
an ideal threshold scheme is equivalent to an orthogonal array.
We then investigate some methods of constructing augmented orthogonal arrays. There is a natural
way to construct augmented orthogonal arrays from orthogonal arrays. (Roughly
speaking, the resulting augmented orthogonal arrays correspond
to the strong ramp schemes considered in \cite{JM1,Yamamoto}.) 
However, we observe that there are also constructions of augmented orthogonal arrays which yield 
ideal ramp schemes that are not strong. Moreover, we show there are parameter situations
for which there exist ideal ramp schemes, but where there do not exist strong ideal ramp schemes.
These results provide  answers to the questions that were first posed in \cite{JM1}.

For future reference, Figure \ref{fig1} shows the relationships between the ramp schemes and combinatorial
structures we discuss in this paper. 

\begin{figure}
\begin{center}
\begin{tabular}{ccc}
strong $(s,t,n)$ ramp scheme defined over a set of $v$ shares & $\Longleftrightarrow$ & OA$(t,n+t-s,v)$ \\
$\Downarrow$ & & $\Downarrow$ \\
$(s,t,n)$ ramp scheme defined over a set of $v$ shares & $\Longleftrightarrow$ & AOA$(s,t,n,v)$ 
\end{tabular}
\end{center}
\caption{Relationships between  ramp schemes and combinatorial
structures}
\label{fig1}
\end{figure}

The rest of this paper is organized as follows. In Section \ref{defns.sec}, we give formal 
definitions of ramp and threshold schemes, based on the  ``distribution rules''
for the scheme. Section \ref{equiv.sec} reviews combinatorial structures equivalent
to ideal threshold schemes (orthogonal arrays, MDS codes, etc.) in both the linear and
general cases (in this context, the term ``linear'' means that the object in question
can be viewed as a subspace of a vector space over a finite field). Section \ref{AOA.sec}
introduces the new notion of an augmented orthogonal array (AOA). We then discuss the connection
between AOAs and orthogonal arrays. We also provide some constructions for AOAs in situation
where ``associated'' orthogonal arrays do not exist. Section \ref{ramp-equiv.sec} gives the proof that
an ideal ramp scheme is equivalent to an AOA. We also provide examples of ideal ramp schemes
that are not strong in this section. Finally, we conclude with some discussion and comments in 
Section \ref{summary.sec}.

\section{Formal Definitions of Ramp and Threshold Schemes}
\label{defns.sec}

In this section, we provide formal definitions of ramp schemes in two 
flavours, namely, ``weak'' and ``perfect''. Our definitions are phrased
in terms of ``distribution rules'', which is one of the standard ways
of defining these types of schemes. (For a discussion of this model in 
relation to other models, we refer the reader to \cite{JM2}.)

Suppose there is an $(s,t,n)$ ramp scheme defined over
a set of $v$ secrets. We will assume without loss of generality that the set of possible shares
for any player is $\mathcal{X} = \{1, \dots, v\}$, and we denote the set of possible secrets 
by $\mathcal{K}$.

We now present a  formal mathematical model for a ramp scheme.
Denote the set of $n$ \emph{players} by
$\mathcal{P} = \{P_1, \dots , P_n\}$.  
A \emph{distribution rule} $d$ represents a possible distribution of shares
to the $n$ players. So we can view $d$ as a function, i.e.,
$d : \mathcal{P} \rightarrow \mathcal{X}$.
The share given to $P_i$ is $d(P_i)$, $1 \leq i \leq n$.
We may also represent $d$ as an $n$-tuple $(d_1, \dots , d_n)$, where
$d_i = d(P_i)$, for $1 \leq i \leq n$.
Finally, for a distribution rule $d$ and a subset of players $\mathcal{P}_0 \subseteq \mathcal{P}$, 
we define the \emph{projection of $d$ to $\mathcal{P}_0$}, denoted  $d|_{\mathcal{P}_0}$, 
to be the restriction of
$d$ to the subdomain $\mathcal{P}_0$. A projection can  be represented 
as a tuple of length $|\mathcal{P}_0|$.

For every possible secret $K \in \mathcal{K}$, we have a collection of
distribution rules denoted by $\mathcal{D}_K$. 
%We assume that these distribution rules are all distinct. 
The collection  $\mathcal{D}_K$ 
is the subset of distribution rules
for which $K$ is the value of the secret. The entire set of distribution
rules is denoted by $\mathcal{D} = \cup_K \mathcal{D}_K$.
 We assume without loss of generality that the distribution rules in $\mathcal{D}$ are all distinct. 
%We assume that each set $\mathcal{D}_K$ has the same cardinality. 

When the \emph{dealer} wishes to share a secret  $K \in \mathcal{K}$, they 
first choose %uniformly at random 
a distribution rule $d \in \mathcal{D}_K$
and then they use $d$ to distribute shares to the $n$ players. The choice of the secret $K$
and the distribution rule $d$ will be determined by appropriate 
probability distributions. The only property that we will require moving forward is that
every possible distribution rule is used with positive probability (which implies that every
possible secret occurs with positive probability).

\begin{Definition}
\label{weak.rs}
A set of distribution rules $\mathcal{D}$ is a \emph{weak %perfect (but not necessarily ideal) 
$(s,t,n)$ ramp scheme} if the  following two properties are satisfied:

\begin{description}
\item[(1)] Suppose $K,L \in \mathcal{K}$, $|\mathcal{P}_0| \geq t$, $d \in \mathcal{D}_K$, $e \in \mathcal{D}_L$
and $d|_{\mathcal{P}_0} = e|_{\mathcal{P}_0}$. Then $K = L$.
(This property is saying that $t$ or more shares determine a unique secret.)
\item[(2)] Suppose $K \in \mathcal{K}$, $|\mathcal{P}_0| \leq s$ and $d \in \mathcal{D}_K$.
Then, for every $L  \in \mathcal{K}$, there is at least one distribution rule 
$e \in \mathcal{D}_L $ such that $d|_{\mathcal{P}_0} = e|_{\mathcal{P}_0}$.
\end{description}
\end{Definition}

Property {\bf (2)} is saying that $s$ or fewer shares do not rule out any possible value of the secret.
An alternative definition is to require that the probability distribution on the set of 
possible secrets is unchanged even when $s$ shares are known.
Suppose we consider $\mathbf{K}$ to be the random variable defined by the probability distribution 
on the set of secrets $\mathcal{K}$. For any subset of players $\mathcal{P}_0$,  define 
$\mathbf{X}(\mathcal{P}_0)$ to be the random variable determined by the probability distribution
induced on the possible lists (i.e., tuples) of shares given to the players in $\mathcal{P}_0$.

\begin{Definition}
A set of distribution rules $\mathcal{D}$ is a \emph{perfect %perfect (but not necessarily ideal) 
$(s,t,n)$ ramp scheme} if the  following two properties are satisfied:

\begin{description}
\item[(1)] Suppose $K,L \in \mathcal{K}$, $|\mathcal{P}_0| \geq t$, $d \in \mathcal{D}_K$, $e \in \mathcal{D}_L$
and $d|_{\mathcal{P}_0} = e|_{\mathcal{P}_0}$. Then $K = L$.
\item[(2*)] Suppose $|\mathcal{P}_0| \leq s$ and 
let $d$ be any distribution rule.
Then, for every $L \in  \mathcal{K}$, it holds that
\[\mathbf{Pr}[ \mathbf{K} = L  \mid  \mathbf{X}(\mathcal{P}_0) = d|_{\mathcal{P}_0}] 
= \mathbf{Pr}[\mathbf{K} = L].\]
\end{description}
\end{Definition}
Of course, any perfect ramp scheme is also a weak ramp scheme.

\bigskip

The following is a standard method to produce ramp schemes.
We use the presentation of the construction from \cite{OK}
(see also \cite[\S 14.2.1]{Stinson}).

\begin{Example}[Shamir Ramp Scheme]\label{Shamir-ramp}
Let $q \geq n+1$ be a prime power, define $t_0 = t-s$, 
let $\mathcal{K} = (\eff_q)^{t_0}$,
and let $\mathcal{X} = \eff_q$. 
%In an initialization phase, 
Define $x_1,x_2, \dots ,x_n$  to be $n$ 
distinct non-zero 
elements of $\eff_q$.
The value  $x_i$ is associated with $P_i$, for all $i$, $1 \leq i \leq n$.
%The $x_i$'s  are public information.
%\item [\textbf{Share Distribution}] \mbox{}%
%\item [2.\hfill]
There are $q^{t}$ distribution rules in the scheme. For
any $t$-tuple $\mathbf{a} = (a_0, \dots , a_{t-1}) \in (\eff_q)^{t}$
we have an associated distribution rule $d_{\mathbf{a}} \in \mathcal{D}_{(a_0, \dots , a_{t_0-1})}$ defined by
the equation
\[ d_{\mathbf{a}}(P_j) = \sum_{i=0}^{t-1} a_i (x_j)^i, \]
for $1 \leq i \leq n$.
The result is an ideal $(s,t,n)$ ramp scheme with shares from $\eff_q$.
\end{Example}

\begin{Remark}
\label{shamir.rem}
In the construction above, there is a distribution rule corresponding to every possible
polynomial $a(x) \in \eff_q[x]$ of degree at most $t-1$. The shares are evaluations of the polynomial $a(x)$,
analogous to a Reed-Solomon code. The secret consists of the first $t-s$ coefficients of the polynomial $a(x)$.
In the case $t-s = 1$, we have the Shamir threshold scheme \cite{Shamir} and the secret is the
constant term of the the polynomial $a(x)$.  
\end{Remark}

The following result is well-known. We provide the proof, which uses a simple counting argument,  for completeness.

\begin{Lemma}
\label{idealbound} Suppose that $\mathcal{K}$ is the set of possible secrets and $\mathcal{X}$ is the set of
possible shares for any $(s,t,n)$ ramp scheme. Then $|\mathcal{K}| \leq |\mathcal{X}|^{t-s}$.
\end{Lemma}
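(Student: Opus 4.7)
The plan is a straightforward counting argument that exploits both defining properties of a ramp scheme. The key idea is to fix the shares on an $s$-subset of players and show that the secret is then determined by the additional $t-s$ shares on an enlarging $t$-subset, while simultaneously showing that every secret is realizable with those fixed $s$ shares.

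First I would pick any subset $\mathcal{P}_0 \subseteq \mathcal{P}$ with $|\mathcal{P}_0| = s$, and fix an arbitrary distribution rule $d_0 \in \mathcal{D}$. Set $\mathbf{x} = d_0|_{\mathcal{P}_0} \in \mathcal{X}^s$. Then I would choose any extension $\mathcal{P}_1 \supseteq \mathcal{P}_0$ with $|\mathcal{P}_1| = t$. For each secret $K \in \mathcal{K}$, property \textbf{(2)} of Definition \ref{weak.rs} guarantees the existence of a distribution rule $d_K \in \mathcal{D}_K$ satisfying $d_K|_{\mathcal{P}_0} = \mathbf{x}$ (applied with the role of $d$ played by $d_0$). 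Fix one such $d_K$ for each $K$.

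Next I would define the map $\varphi : \mathcal{K} \to \mathcal{X}^{t}$ by $\varphi(K) = d_K|_{\mathcal{P}_1}$. By property \textbf{(1)}, if $\varphi(K) = \varphi(L)$, then $d_K|_{\mathcal{P}_1} = d_L|_{\mathcal{P}_1}$ with $|\mathcal{P}_1| = t$, which forces $K = L$; hence $\varphi$ is injective. On the other hand, every tuple in the image of $\varphi$ agrees with $\mathbf{x}$ on the $s$ coordinates indexed by $\mathcal{P}_0$, so the image is contained in a set of size $|\mathcal{X}|^{t-s}$ (only the $t-s$ coordinates in $\mathcal{P}_1 \setminus \mathcal{P}_0$ are free). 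Combining injectivity with this bound on the image size yields $|\mathcal{K}| \leq |\mathcal{X}|^{t-s}$.

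There is no real obstacle here; the argument is elementary. The only subtlety is remembering that property \textbf{(2)} is needed to guarantee that the $s$ coordinates can be simultaneously fixed across all secrets, so that the information distinguishing secrets must live in the remaining $t-s$ coordinates. Since the lemma is stated for arbitrary ramp schemes, I would apply the weak definition, noting that any perfect ramp scheme satisfies \textbf{(2)} as well.
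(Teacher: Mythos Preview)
Your proof is correct and is essentially identical to the paper's argument: both fix an $s$-subset of players, use property~\textbf{(2)} to realize every secret with those $s$ shares held constant, and then use property~\textbf{(1)} on the enlarged $t$-subset to deduce that the restrictions to the remaining $t-s$ coordinates are pairwise distinct. The only cosmetic difference is that the paper takes $\mathcal{P}_1$ to be the disjoint $(t-s)$-set rather than the containing $t$-set, and you are slightly more explicit in invoking property~\textbf{(1)} for injectivity.
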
 

\begin{proof}
Let $\mathcal{P}_0 \subseteq \mathcal{P}$ and  $\mathcal{P}_1 \subseteq \mathcal{P}$, 
where $|\mathcal{P}_0| = s$, $|\mathcal{P}_1| = t-s$, and
$\mathcal{P}_0 \cap \mathcal{P}_1 = \emptyset$. Fix any distribution rule $d$.
Property {\bf (2)} states that, for every $L  \in \mathcal{K}$, there exists a distribution rule 
$e_L \in \mathcal{D}_L $ such that $d|_{\mathcal{P}_0} = e_L|_{\mathcal{P}_0}$. Consider the set
\[ \{ e_L|_{\mathcal{P}_1} : L  \in \mathcal{K}\} .\] This set consists of 
$|\mathcal{K}|$ different $(t-s)$-tuples of shares. Therefore, $|\mathcal{K}| \leq |\mathcal{X}|^{t-s}$.
\end{proof}

If equality is achieved in Lemma \ref{idealbound}, then the ramp scheme is termed \emph{ideal}.

\begin{Corollary}
\label{idealTS} Suppose that $\mathcal{K}$ is the set of possible secrets and $\mathcal{X}$ is the set of
possible shares for any $(t,n)$ threshold scheme. Then $|\mathcal{K}| \leq |\mathcal{X}|$.
\end{Corollary}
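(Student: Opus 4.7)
The plan is to deduce this immediately from Lemma \ref{idealbound} by specializing the parameters. The key observation, noted explicitly in the introduction, is that a $(t,n)$ threshold scheme is precisely a $(t-1,t,n)$ ramp scheme: the requirement that any $t$ players recover the secret matches property \textbf{(1)}, and the requirement that no $t-1$ players determine the secret matches property \textbf{(2)} with $s = t-1$.

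Given this identification, I would simply invoke Lemma \ref{idealbound} with $s = t-1$. The bound reads $|\mathcal{K}| \leq |\mathcal{X}|^{t-s} = |\mathcal{X}|^{t-(t-1)} = |\mathcal{X}|$, which is exactly the desired inequality. No separate counting argument is needed; the work has already been done in the lemma.

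There is essentially no obstacle here, since the corollary is a direct parameter specialization. The only thing worth remarking on is the (trivial) verification that the threshold-scheme definition in the introduction is subsumed by the ramp-scheme definition in the $s = t-1$ case, so that Lemma \ref{idealbound} genuinely applies. Once that identification is in place, the two-line deduction above completes the proof.
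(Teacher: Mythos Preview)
Your proposal is correct and matches the paper's own proof exactly: the paper simply writes ``Set $s = t-1$ in Lemma \ref{idealbound}.'' Your additional remark identifying a $(t,n)$ threshold scheme with a $(t-1,t,n)$ ramp scheme is the implicit justification behind that one-line application.
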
 

\begin{proof}
Set $s = t-1$ in Lemma \ref{idealbound}.
\end{proof}

\section{Combinatorial Equivalences Involving Threshold Schemes}
\label{equiv.sec}

In this section, we review known combinatorial equivalences involving threshold schemes.

\begin{Definition}
An \emph{orthogonal array}, denoted  $\mathrm{OA}(t,k,v)$,
is a $v^t$ by $k$ array $A$ defined on an alphabet $\mathcal{X}$ of cardinality $v$,
such that
any $t$ of the  $k$ columns of $A$
contain all possible $k$-tuples from $\mathcal{X}^t$ exactly once.
\end{Definition}

\begin{Definition}
A \emph{maximum distance separable code}, or \emph{MDS code}, 
of  length $k$ and size $v^t$ over an alphabet $\mathcal{X}$ of size $v$,
is a set of $v^t$ vectors (called \emph{codewords}) in $\mathcal{X}^k$,
having the property that 
the hamming distance between any two distinct codewords 
is at least  $k-t+1$.
\end{Definition}

\begin{Theorem}
\label{equiv1}
The following are equivalent:
\begin{enumerate}
\item an ideal $(t,k-1)$-threshold scheme with shares from an alphabet of size $v$
\item an  $\mathrm{OA}(t,k,v)$
\item an MDS code of length $k$ and size $v^t$ over an alphabet of size $v$.
\end{enumerate}
\end{Theorem}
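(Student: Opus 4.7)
The plan is to establish the triple equivalence by proving $(2) \Leftrightarrow (3)$ and $(1) \Leftrightarrow (2)$ separately. The first link is purely combinatorial, while the second translates between a combinatorial object and a secret-sharing structure, using Lemma \ref{idealbound} (specialized to $s = t-1$, i.e., Corollary \ref{idealTS}) for the counting that ties the two sides together.

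For $(2) \Leftrightarrow (3)$, the bridge is that two distinct rows of an $\mathrm{OA}(t,k,v)$ cannot agree in $t$ or more coordinates: if they did, the $t$ agreed columns would contain some $t$-tuple at least twice, contradicting the OA property. Hence the $v^t$ rows form a code of length $k$ in which every pair of distinct codewords has Hamming distance at least $k-t+1$, i.e., an MDS code. Conversely, given an MDS code $C$ of length $k$ and size $v^t$, for any choice of $t$ coordinates the projection $C \to \mathcal{X}^t$ is injective (two codewords agreeing on $t$ coordinates would have Hamming distance at most $k-t$), and a cardinality comparison $|C| = v^t = |\mathcal{X}^t|$ upgrades injectivity to bijectivity, recovering the OA.

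For $(1) \Leftrightarrow (2)$, the forward direction uses an $\mathrm{OA}(t,k,v)$ by singling out one column to serve as the secret column $\mathcal{K} = \mathcal{X}$ and treating the remaining $k-1$ columns as player columns; each row becomes a distribution rule. Reconstruction (Property (1)) with $|\mathcal{P}_0| \geq t$ follows because any $t$ player columns contain each $t$-tuple exactly once, so $t$ shares identify a unique row and hence a unique secret. Privacy (Property (2)) with $|\mathcal{P}_0| \leq t-1$ follows by applying the OA property to the secret column together with any superset of $\mathcal{P}_0$ of size $t-1$: every combination of a secret value and $(t-1)$-tuple of shares appears as a row, so every candidate secret remains consistent. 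For the reverse direction, given an ideal $(t,k-1)$-threshold scheme, build an array whose rows are the tuples $(K, d_1,\dots,d_{k-1})$ indexed by $d \in \mathcal{D}_K$, $K \in \mathcal{K}$. Ideality plus Corollary \ref{idealTS} gives $|\mathcal{K}| = v$, so the alphabet has size $v$ throughout. To verify the OA property, I split by cases on whether the chosen $t$ columns include the secret column: the ``all player'' case is handled by Property (1), and the ``secret plus $(t-1)$ players'' case is handled by Property (2), together with a counting check that forces exactly $v^t$ rows.

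The main obstacle will be pinning down the row count: Property (2) immediately yields $|\mathcal{D}_K| \geq v^{t-1}$ and hence $|\mathcal{D}| \geq v^t$, but the matching upper bound requires ruling out duplicated rows $(K, d_1, \dots, d_{k-1})$. Distinct distribution rules with the same secret and same full share tuple cannot occur because we assumed distribution rules are distinct, and distinct secrets with the same share tuple are excluded by Property (1) applied with $\mathcal{P}_0$ the full player set (since $k-1 \geq t$ in the nontrivial case). Combined with the injectivity of projection onto any $t-1$ player columns inside each $\mathcal{D}_K$ (again via Property (1), taking $\mathcal{P}_0$ to be those $t-1$ players plus one more), these observations tighten $|\mathcal{D}| = v^t$ and simultaneously show that every $t$ columns contain every $t$-tuple exactly once, completing the equivalence.
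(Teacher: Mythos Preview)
Your treatment of $(2) \Leftrightarrow (3)$ and $(2) \Rightarrow (1)$ is fine. Note that the paper does not actually prove Theorem~\ref{equiv1} in detail; it offers a one-line sketch and defers to the literature, with the substantive argument for the hard direction $(1) \Rightarrow (2)$ effectively appearing later in the ramp-scheme generalization (Lemma~\ref{L1} and Theorem~\ref{main.thm}).

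Your argument for $(1) \Rightarrow (2)$, however, has a real gap. The claim ``Property {\bf (2)} immediately yields $|\mathcal{D}_K| \geq v^{t-1}$'' is unjustified: Property {\bf (2)} only says that any \emph{realized} $(t-1)$-tuple of shares is consistent with every secret; it does not say that all $v^{t-1}$ possible $(t-1)$-tuples are realized by some distribution rule. Likewise, your injectivity claim --- that the projection of $\mathcal{D}_K$ onto any $t-1$ player columns is injective --- does not follow from Property {\bf (1)} in the way you suggest: if $d, d' \in \mathcal{D}_K$ agree on $t-1$ players, adjoining one more player gives a $t$-set on which they may disagree, and Property {\bf (1)} is then silent. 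Both statements are in fact true, but they require the ideal hypothesis in an essential way; this is precisely the content of Lemma~\ref{L1}. The paper's proof of Theorem~\ref{main.thm} then uses Lemma~\ref{L1} to run a Hamming-distance-$1$ connectivity argument (from any realized $t$-tuple of shares, every neighbouring $t$-tuple is also realized, and one iterates) and separately to rule out repeated $t$-tuples. Without that step or an equivalent, you have not established that every $t$-tuple occurs in every choice of $t$ columns, so the array you build is not yet known to be an OA.
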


Given an ideal $(t,k-1)$-threshold scheme, if we write out all the possible 
distribution rules in the form of an array, then this array is the stated orthogonal array.
The rows of the orthogonal array form the codewords in the stated code.

The equivalence of 2.\ and 3.\ is a ``classical'' result that has been known at least since the 
work of Delsarte \cite{Delsarte}. The equivalence
of 1.\ and 2.\ was proven by Keith Martin in 1991 in his PhD thesis  \cite{Mar}, and independently 
by Dawson, Mahmoodian and Rahilly in 1993 \cite{DMR} and by Blakley and Kabatianski
in 1995 \cite{BK}. Much earlier, in 1983, it was stated in 
a paper by Karnin, Greene and Hellman \cite{KGH} that an ideal threshold scheme implies 
the existence of the corresponding MDS code, but their proof is incomplete.

In this paper, we use the term ``linear'' to refer to combinatorial structures that
can be viewed as subspaces of a vector space over a 
finite field\footnote{However, note that the term ``linear ramp scheme''
has a completely different meaning in \cite{JM1}.}. The following theorem 
is the analogue of Theorem \ref{equiv1} restricted to the setting of linear threshold schemes.
First, however, we define some relevant concepts.

\begin{Definition}
Let $q$ be a prime power and let $t\geq 2$. The \emph{desarguesian projective geometry PG$(t-1,q)$}
is based on the vector space $\mathcal{V} = (\eff_q)^t$. The \emph{points} in PG$(t-1,q)$
are the one-dimensional subspaces of $\mathcal{V}$; the \emph{lines} are the  two-dimensional subspaces of 
$\mathcal{V}$, etc. A \emph{hyperplane} in the geometry is a $(t-1)$-dimensional subspace of $\mathcal{V}$.
\end{Definition}

\begin{Definition}
A \emph{$k$-arc} in the projective geometry PG$(t-1,q)$ is a set of $k$ points 
in PG$(t-1,q)$ such that no $t$ of them are on a hyperplane.
\end{Definition}

\begin{Theorem}
\label{equiv2}
The following are equivalent:
\begin{enumerate}
\item an ideal linear $(t,k-1)$-threshold scheme with shares from $\eff_q$
\item  a linear OA$(t,k,q)$ defined over $\eff_q$
\item  a linear MDS code of length $k$ and dimension $t$ over $\eff_q$ %(so $d = k-t+1$)
\item  a $k$-arc in  PG$(t-1,q)$
\item a $t \times k$ matrix $M$ over $\eff_q$ such that any $t$ columns of $M$ are linearly independent.
\end{enumerate}
\end{Theorem}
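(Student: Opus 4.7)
The plan is to prove Theorem~\ref{equiv2} by taking condition (5) as the central pivot and establishing its equivalence with each of (1)--(4). Condition (5) is the most compact statement: it is a purely matrix-theoretic property, and all of (1)--(4) unpack into it via short translations. The general (non-linear) equivalence (1)$\Leftrightarrow$(2) is already provided by Theorem~\ref{equiv1}, so the linear case will largely be a matter of tracking subspace structure.

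First I would establish (4)$\Leftrightarrow$(5). Given $M$, let $c_1,\dots,c_k$ be its columns. Each $c_j$ is nonzero (any single column is linearly independent), so it represents a point $P_j$ in PG$(t-1,q)$, and linear independence of any two columns ensures the $P_j$ are distinct. A hyperplane in PG$(t-1,q)$ is a $(t-1)$-dimensional subspace of $(\eff_q)^t$, so $t$ of the $P_j$ lie on a hyperplane iff the corresponding columns are linearly dependent. Conversely, from a $k$-arc I pick any nonzero representative per point and assemble it as $M$; the property is invariant under scaling of individual columns. Next, (5)$\Leftrightarrow$(3): view $M$ as a generator matrix of $C\subseteq(\eff_q)^k$. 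Then $\dim C=t$ (since the rank of $M$ is $t$), and the standard characterization of linear MDS codes says $C$ is MDS iff every $t\times t$ submatrix formed from $t$ columns of the generator matrix is invertible, which is exactly (5). (One direction: a singular $t \times t$ submatrix on columns $S$ produces a nonzero codeword with support disjoint from $S$, of weight $\le k-t$; the converse is analogous.)

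For (5)$\Leftrightarrow$(2), form the $q^t\times k$ array whose rows are $\{xM:x\in(\eff_q)^t\}$. This is a $t$-dimensional subspace of $(\eff_q)^k$, hence a linear array. For any index set $S$ of size $t$, projection to the $S$-columns sends $xM\mapsto xM_S$, which is a bijection onto $(\eff_q)^t$ iff $M_S$ is invertible iff the $t$ columns are linearly independent; so (5) is exactly the OA property. Finally, (1)$\Leftrightarrow$(2) in the linear case follows by specializing Theorem~\ref{equiv1}: writing out the distribution rules of an ideal linear $(t,k-1)$-threshold scheme as rows of an array whose $k$ columns are indexed by the $k-1$ players and the secret gives an OA$(t,k,q)$, and linearity of the scheme means this row set is a $t$-dimensional subspace, hence a linear OA. Going back, the row space of a linear OA gives a distribution rule set that is $\eff_q$-linear, so the induced threshold scheme is linear and ideal.

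The main obstacle is purely bookkeeping: verifying that ``linearity'' is preserved under each translation. The cleanest way is to observe that in each of (1)--(5) the underlying object is a $t$-dimensional $\eff_q$-subspace of $(\eff_q)^k$ (or is canonically encoded by a $t\times k$ matrix of rank $t$), and each of the five conditions is an equivalent restatement of ``every $t$-coordinate projection of that subspace is a bijection onto $(\eff_q)^t$.'' Once this common description is made explicit, the equivalences become essentially definitional, with the only substantive input being the MDS/generator-matrix characterization used in (3)$\Leftrightarrow$(5).
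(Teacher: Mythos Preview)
Your proposal is correct and aligns with the paper's own treatment. The paper does not give a formal proof of Theorem~\ref{equiv2}; it merely remarks that the equivalence of (1), (2), (3) is the linear specialization of Theorem~\ref{equiv1} (since the transformations ``preserve linearity''), that the matrix in (5) is a basis for the code/OA, and that the $k$-arc in (4) is obtained from the one-dimensional subspaces spanned by the columns of this matrix. Your argument makes exactly these observations explicit, organized around (5) as a pivot, and fills in the routine verifications the paper omits.
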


%Here ``linear'' means that the object in question corresponds to a 
%$t$-dimensional subspace of $(\eff_q)^k$.

The equivalence of 1., 2.\ and 3.\ in Theorem \ref{equiv2} is the same as in Theorem \ref{equiv1}, 
since it is clear that the relevant transformations ``preserve'' linearity.

We note that the matrix defined in 5.\ is just a basis for the code or the orthogonal array. 
The points in the $k$-arc are  the one-dimensional subspaces of $(\eff_q)^t$
generated by the columns of this matrix (recall that the ``points'' in a projective geometry of
dimension $d$ are one-dimensional subspaces of a $(d+1)$-dimensional vector space).

%Interestingly, Theorem \ref{equiv2} covers several different areas of study:
%threshold schemes (cryptography), orthogonal arrays (statistical design of experiments),
%error-correcting codes (engineering), and $k$-arcs (finite geometry).

%\bigskip

The main existence results for these structures come from Reed-Solomon codes (RS-codes),
which are linear MDS codes. The following result, stated in terms of the equivalent 
orthogonal arrays, is well-known
(see, e.g., \cite[Ch.\ 11, \S 5]{MS}).
%The Shamir threshold schemes and ramp schemes are derived
%from an RS-code (see Remark \ref{shamir.rem}).

\begin{Theorem}
\label{RS.thm}
Suppose $q$ is a prime power and $t \geq 2$. Then there is a linear OA$(t,q+1,q)$.
\end{Theorem}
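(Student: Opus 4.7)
The plan is to invoke the equivalence (5 $\Leftrightarrow$ 2) from Theorem \ref{equiv2} and construct an explicit $t \times (q+1)$ matrix $M$ over $\eff_q$ such that every $t$ columns are linearly independent. This will then automatically yield the desired linear OA$(t,q+1,q)$.

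First I would enumerate the elements of the field as $\eff_q = \{x_1, x_2, \dots, x_q\}$ and form the $t \times q$ Vandermonde-type matrix whose $j$-th column is $(1, x_j, x_j^2, \dots, x_j^{t-1})^T$. To this I would append one additional column, the ``point at infinity'' column $(0,0,\dots,0,1)^T$, obtaining a $t \times (q+1)$ matrix $M$.

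Next I would check the linear independence of any $t$ columns of $M$ by splitting into two cases. If all $t$ chosen columns come from the first $q$ columns, the resulting $t \times t$ submatrix is a classical Vandermonde matrix with determinant $\prod_{a<b}(x_{j_b}-x_{j_a})$, which is nonzero since the $x_j$ are distinct. If the last column is among the chosen $t$, I would expand the $t \times t$ determinant along that last column; up to a sign this reduces to the determinant of a $(t-1)\times(t-1)$ Vandermonde matrix in the remaining $t-1$ distinct field elements, again nonzero. In both cases the $t$ columns are linearly independent, so $M$ satisfies condition 5 of Theorem \ref{equiv2}, and condition 2 gives the desired linear OA$(t,q+1,q)$.

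There is no real obstacle here: the construction is the standard normal rational curve / extended Reed--Solomon construction, and the only thing to verify is the Vandermonde determinant computation in the two cases above, which is routine. The slight subtlety, if any, is remembering to include the extra column at infinity so as to reach length $q+1$ rather than merely $q$.
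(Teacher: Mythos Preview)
Your proposal is correct and is essentially the same as the paper's proof: the paper also writes down the $t\times(q+1)$ matrix whose first $q$ columns are the Vandermonde columns $(1,x,x^2,\dots,x^{t-1})^T$ for $x\in\eff_q$ together with the extra column $(0,\dots,0,1)^T$, and asserts that any $t$ columns are linearly independent. In fact you supply more detail than the paper, which simply states the linear independence without carrying out the two Vandermonde determinant cases.
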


\begin{proof} Let 
$\alpha_1, \dots, \alpha_{q-1}$ be the nonzero elements of $\eff_q$. 
Consider the following $t$ by  $q+1$ matrix $M_0$:
%The first $k$ columns 
%of $M$ comprise a Vandermonde matrix having the form 
%\[ %M_1 = 
%\left( 
%\begin{array}{cccc}
%1 & 1 & \cdots & 1\\
%\alpha_1 &  \alpha_2 & \cdots & \alpha_k\\
%{\alpha_1}^2 &  {\alpha_2}^2 & \cdots & {\alpha_k}^2\\
 %\vdots & \vdots & \ddots & \vdots \\
%{\alpha_1}^{t-1} &  {\alpha_2}^{t-1} & \cdots & {\alpha_k}^{t-1}
%\end{array}
%\right) ,\]
%where $\alpha_1, \dots, \alpha_k$ are distinct nonzero elements of $\eff_q$.
%$M_1$ is just a generator matrix for a Reed-Solomon code.

%If $q =k$, then we  modify $M_1$ by adding one additional column, as follows:
\[ M_0 = 
\left( 
\begin{array}{cccccc}
1 & 1 & 1 & \cdots & 1 & 0\\
0 & \alpha_1 &  \alpha_2 & \cdots & \alpha_{q-1} & 0\\
0 & {\alpha_1}^2 &  {\alpha_2}^2 & \cdots & {\alpha_{q-1}}^2 & 0\\
\vdots &  \vdots & \vdots & \ddots & \vdots & \vdots\\
0 & {\alpha_1}^{t-1} &  {\alpha_2}^{t-1} & \cdots & {\alpha_{q-1}}^{t-1} & 1
\end{array}
\right) .\]
Any $t$ columns of $M_0$ are linearly independent, so 
it generates a linear OA$(t,q+1,q)$
\end{proof}

It has been conjectured (e.g., by Hedayat, Sloane and Stufken, \cite[p.\ 96]{HSS}) that if an 
MDS code of length $k$ and size $q^t$ over $\eff_q$ exists, then 
there is a {\it linear} MDS code with the same parameters.

%\bigskip

%In the rest of this note, we discuss necessary conditions (bounds)
%for the existence of these objects, in both the linear and general cases.

In the ``general'' case, the first  necessary conditions derive from the 
classical Bush bounds for orthogonal arrays, which were 
proven in 1952 (see \cite{CD}).

\begin{Theorem}[Bush Bound]
\label{bush.thm}
If there is an OA$(t,k,v)$, then
\[ k \leq 
\begin{cases}
v+t-1 & \text{if $t=2$, or if $v$ is even and $3 \leq t \leq v$}\\
v+t-2 & \text{if $v$ is odd and $3 \leq t \leq v$}\\
t+1 & \text{if $t \geq v$.}
\end{cases}
\]
\end{Theorem}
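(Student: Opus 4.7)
My plan is to prove Bush's bound in three stages: a derivation-based reduction to strength $2$, an odd-$v$ sharpening, and a separate argument for the regime $t \geq v$.

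The key structural tool is the \emph{derivation lemma}: given an OA$(t,k,v)$ with $t \geq 2$, fixing any single column at any symbol $\alpha \in \mathcal{X}$ selects exactly $v^{t-1}$ rows which, when restricted to the remaining $k-1$ columns, form an OA$(t-1,k-1,v)$. This follows at once from the definition: any $t-1$ of the surviving columns, together with the original column fixed at $\alpha$, covered every compatible $t$-tuple exactly once. Iterating derivation $t-2$ times reduces OA$(t,k,v)$ to OA$(2,k-t+2,v)$, and the classical strength-$2$ bound---equivalently, the MOLS bound $k' \leq v+1$ for any OA$(2,k',v)$---yields $k \leq v+t-1$. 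This disposes of the case $t=2$ and gives the bound $v+t-1$ in the even-$v$ subcase of the second clause.

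For odd $v$ with $3 \leq t \leq v$, I would sharpen to $k \leq v+t-2$ by ruling out the existence of OA$(3, v+2, v)$ when $v$ is odd; deriving a hypothetical OA$(t, v+t-1, v)$ down to strength $3$ would then produce precisely such an array and yield a contradiction. To rule out OA$(3, v+2, v)$ I would adapt Bush's original parity argument: normalize the first three columns to enumerate $\mathcal{X}^3$, view each of the remaining $v-1$ columns as a function $f_i : \mathcal{X}^3 \to \mathcal{X}$, and count in two different ways suitably weighted incidences of row-pairs agreeing on a designated pair of coordinates. The resulting combinatorial identity, taken modulo $2$, forces $v$ to be even.

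For $t \geq v$, the bound $k \leq t+1$ requires a separate argument. Derive $t-v$ times to reduce to OA$(v, k-t+v, v)$, and show that no OA$(v, v+2, v)$ exists for $v \geq 2$. With the first $v$ columns normalized to enumerate $\mathcal{X}^v$, the two hypothetical extra columns correspond to functions $\mathcal{X}^v \to \mathcal{X}$ whose restriction to every axis-aligned $2$-dimensional slice is a pair of mutually orthogonal Latin squares of order $v$; a short combinatorial analysis (descending, in the base case, to the non-existence of two MOLS of order $2$) provides the desired contradiction. The main obstacle throughout is the odd-$v$ parity step, which requires the one genuinely combinatorial identity; the derivation lemma and MOLS facts are standard, and the $t \geq v$ case reduces to a short counting argument once the Latin-hypercube structure of the extra columns is exposed.
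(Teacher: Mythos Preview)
The paper does not prove this theorem: it is quoted as the classical 1952 result of Bush, with a reference to \cite{CD}, and used thereafter as a black box. So there is no in-paper proof to compare against; any argument you give goes beyond what the paper provides.

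Your outline for the first two regimes is the standard one and is essentially correct. The derivation lemma plus the strength-$2$ bound $k'\leq v+1$ gives $k\leq v+t-1$ immediately, and for odd $v$ the reduction to ruling out OA$(3,v+2,v)$ is the right move; Bush's original proof does proceed via a parity (sign-of-permutation) count, although your description of it is too schematic to check as written.

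The $t\geq v$ case, however, has a real gap. You reduce to ruling out OA$(v,v+2,v)$ and then observe that the two extra columns restrict to a pair of orthogonal Latin squares of order $v$ on every axis-aligned $2$-slice, claiming a contradiction that ``descends, in the base case, to the non-existence of two MOLS of order $2$.'' But two MOLS of order $v$ exist for every $v\notin\{2,6\}$, so the $2$-slice structure is no obstruction when $v\geq 3$, and nothing in the problem lets you pass from alphabet size $v$ down to alphabet size $2$. The standard argument is instead a direct count: in a putative OA$(t,t+2,v)$ with the all-zero row present, for each $i\in\{1,\dots,t\}$ and each nonzero $\alpha$ let $R_{i,\alpha}$ be the unique row with $\alpha$ in column $i$ and $0$ in the remaining columns among $1,\dots,t$. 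Each $R_{i,\alpha}$ must be nonzero in both columns $t+1$ and $t+2$, and the resulting $t(v-1)$ pairs in $(\mathcal{X}\setminus\{0\})^2$ are pairwise distinct (any coincidence would force two different rows to agree on $t$ columns). Hence $t(v-1)\leq (v-1)^2$, i.e.\ $t\leq v-1$, contradicting $t\geq v$. Replace your MOLS-of-order-$2$ step with this count.
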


There have been some relatively minor improvements to these general bounds over the years.
On the other hand, the linear case has received considerably more attention
and much more is known in this case.

The following is known as the {\bf Main Conjecture} for linear MDS codes.
It is attributed to Segre (1955).

\begin{Conjecture}[{\bf Main Conjecture}]
Suppose $q$ is a prime power. Let $M(t,q)$ denote the maximum value of $k$ such that there
exists a linear MDS code of length $k$ and dimension $t$ over $\eff_q$.
If $2 \leq t < q$, then
\[ M(t,q) = 
\begin{cases}
q+2 & \text{if $q$ is a power of $2$ and $t \in \{3,q-1\}$}\\
q+1 & \text{otherwise.}
\end{cases}
\]
If $t \geq q$, then $M(t,q) = t+1$.
\end{Conjecture}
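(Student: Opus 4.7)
The plan is to translate the question into the language of $k$-arcs in projective geometry via Theorem \ref{equiv2}: a linear MDS code of length $k$ and dimension $t$ over $\eff_q$ corresponds to a $k$-arc in $\mathrm{PG}(t-1,q)$, so the Main Conjecture becomes the assertion that the maximum size of such an arc is $q+1$, apart from the characteristic-two exceptions. The lower bounds $M(t,q)\geq q+1$ are supplied directly by Theorem \ref{RS.thm}, and the anomaly $M(3,q)=M(q-1,q)=q+2$ when $q$ is a power of $2$ is realised by hyperovals (a conic plus its nucleus) and their duals, so the real work is the matching upper bound.

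The case $t\geq q$ is easy: direct counting in $\mathrm{PG}(t-1,q)$ shows that any $(t+2)$-arc would already force a hyperplane through $t$ of its points. For $2\leq t<q$, I would use the standard projection reduction: projecting a $k$-arc in $\mathrm{PG}(t-1,q)$ from one of its own points produces a $(k-1)$-arc in $\mathrm{PG}(t-2,q)$, so a sharp bound in the plane $\mathrm{PG}(2,q)$ propagates to every higher dimension by induction on $t$. This concentrates the difficulty on the planar case $t=3$.

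For plane arcs, the tool of choice is Segre's \emph{lemma of tangents}. To each ordered pair of arc points one assigns the product of the slopes of the tangent lines at the remaining arc points; summing these multiplicative identities over the arc yields polynomial relations in $\eff_q[x]$ of tightly controlled degree. When $q$ is odd, Segre used these relations to show that every $(q+1)$-arc is a conic and hence cannot be extended; when $q$ is even, similar machinery pins $(q+2)$-arcs down as hyperovals, which likewise cannot be extended. A careful accounting of the extendability question then yields $M(3,q)$ in both parities, and the projection step above feeds this back up to all admissible $t$.

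The honest main obstacle is that this is a famous open problem. Segre handled the plane case in odd characteristic, and Ball in 2012 proved the full conjecture when $q$ is prime by a substantial elaboration of the tangent-line method, attaching an algebraic hypersurface to the arc and exploiting its factorisation behaviour over the prime field. For proper prime powers $q=p^h$ with $h\geq 2$ and $t$ in the intermediate range, no complete proof is known, so any proof plan here is necessarily an outline of the Segre--Ball program rather than a self-contained argument; the hard kernel is extracting enough arithmetic information from the tangent identities to rule out extension in every residue class, and it is precisely this step that fails when the ground field is not prime.
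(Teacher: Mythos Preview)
The statement is a \emph{Conjecture}, and the paper offers no proof of it; it is explicitly presented as open, with Theorem~\ref{mainconj.thm} listing only the partial cases (notably Ball's result for $q$ prime) in which it is known. So there is no ``paper's own proof'' to compare against, and you yourself correctly flag at the end that the problem is open for proper prime powers.

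That said, your outline contains a substantive error that is worth naming. The projection step does \emph{not} ``concentrate the difficulty on the planar case $t=3$'' in the way you claim. Projecting a $k$-arc in $\mathrm{PG}(t-1,q)$ from one of its points yields a $(k-1)$-arc in $\mathrm{PG}(t-2,q)$, so inducting down from a sharp planar bound $k\leq q+1$ (say for $q$ odd) gives only $k\leq q+t-2$ in $\mathrm{PG}(t-1,q)$ --- this is precisely the Bush bound of Theorem~\ref{bush.thm}, not the conjectured $k\leq q+1$. Each projection loses one from the bound, so the sharp planar result does \emph{not} propagate sharply to higher dimensions. The genuine difficulty of the Main Conjecture lies in the higher-dimensional cases, and Ball's argument for prime $q$ works directly in $\mathrm{PG}(t-1,q)$ rather than by reduction to the plane. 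So while your summary of the Segre tangent-line machinery for $t=3$ is fine, the bridge you propose from $t=3$ to general $t$ collapses, and a correct outline would have to acknowledge that the higher-dimensional upper bound requires separate, and much harder, ideas.
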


The {\bf Main Conjecture} has been shown to be true in many parameter situations, including all  the 
cases where $q$ is prime. This is a famous result of Simeon Ball \cite{Ball} proven in 2012. 

\bigskip

The following
theorem summarizes some of the known results.  These and other
related results are surveyed in \cite{Huntemann}.

\begin{Theorem}
\label{mainconj.thm}
Suppose that $q = p^j$  where $p$ is prime, and suppose $2 \leq t < q$.
Then the {\bf Main Conjecture} is true in the following cases:
\begin{enumerate}
\item $q$ is prime (for all relevant $t$)
\item $q \leq 27$ (for all relevant $t$)
\item $t \leq 5$ or $t \geq q-3$
\item $t \leq p$.
\end{enumerate}
\end{Theorem}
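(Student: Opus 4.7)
The statement is a compilation of results from the extensive literature on arcs in projective geometry and MDS codes, rather than a single unified theorem, so my plan is to treat each case separately by locating the appropriate source result, using only the equivalences from Theorem~\ref{equiv2} as glue.

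First I would dispatch case (1) and case (4) together. Ball's 2012 theorem, already cited in the preceding paragraph, proves the Main Conjecture whenever $t \leq p$, where $q = p^j$. When $q$ itself is prime we have $p = q$, so the hypothesis $2 \leq t < q$ automatically gives $t \leq p$; thus (1) is a special instance of (4). The underlying mechanism is Segre's ``lemma of tangents'' for arcs in $\mathrm{PG}(t-1,q)$, pushed through to characteristic~$p$ via the $k$-arc equivalence of Theorem~\ref{equiv2}; I would simply invoke Ball's result as a black box.

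Next, for case (3), I would handle $t \leq 5$ by citing the classical arc-classification results: $t=2$ is trivial (it amounts to an affine plane of order $q$), $t=3$ is Segre's conic theorem, and $t=4,5$ are due to Segre, Casse, Glynn, and Thas. For the dual half $t \geq q-3$, I would appeal to MDS duality: if $C$ is a linear $[k,t]$ MDS code over $\eff_q$, then $C^{\perp}$ is a linear $[k,k-t]$ MDS code, so $M(t,q) = k$ is achievable if and only if $M(k-t,q) = k$ is achievable. Translating $t \geq q-3$ through this duality puts $k-t$ into the range handled by $t \leq 5$, and one checks that the exceptional case ``$q$ a power of $2$ and $t \in \{3, q-1\}$'' is preserved under $t \leftrightarrow k-t$ (this is the only bookkeeping point that needs care).

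Finally, case (2) is the most labor-intensive: given (1), only the non-prime prime powers $q \in \{4,8,9,16,25,27\}$ remain, and for each such $q$ one must cite the explicit verification of the Main Conjecture for all $2 \leq t < q$ not already covered by (3) or (4). These verifications are collected, together with attributions, in the Huntemann survey \cite{Huntemann}. The main obstacle is therefore not a single hard argument but the bookkeeping: ensuring that every pair $(t,q)$ satisfying one of the four hypotheses is indeed covered by a specific theorem in the literature, and that the dualization in case (3) correctly tracks the $q+2$ versus $q+1$ exception.
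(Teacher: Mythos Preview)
The paper does not prove this theorem at all: it is stated as a summary of known results from the literature, with the remark ``These and other related results are surveyed in \cite{Huntemann}'' immediately preceding it. Your proposal, which treats each case by locating the appropriate source result (Ball for (1) and (4), classical arc classifications plus MDS duality for (3), and case-by-case verifications collected in \cite{Huntemann} for (2)), is therefore entirely consistent with---and in fact more detailed than---the paper's own treatment.
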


\section{Augmented Orthogonal Arrays}
\label{AOA.sec}

Our objective is to generalize  the results of the previous section
to ideal ramp schemes. It turns out that a certain (apparently new) type of combinatorial array
is required in order to state the resulting equivalences. We define these arrays now, and 
then we prove some basic results about them and give some constructions.

\begin{Definition}
An \emph{augmented orthogonal array}, denoted  $\mathrm{AOA}(s,t,k,v)$,
is a $v^t$ by $k+1$ array $A$ that satisfies the following properties:
\begin{enumerate}
\item the first $k$ columns of $A$ form an orthogonal array $\mathrm{OA}(t,k,v)$ on 
a symbol set $\mathcal{X}$ of size $v$
\item the last column of $A$ contains symbols from a set $\mathcal{Y}$ of size $v^{t-s}$
\item any $s$ of the first $k$ columns of $A$, together with the last column of $A$,
contain all possible $(s+1)$-tuples from $\mathcal{X}^s \times Y$ exactly once.
\end{enumerate}
\end{Definition}

\begin{Remark}
There are generalizations of orthogonal arrays, known as \emph{mixed orthogonal arrays},
that may contain different symbol sets in different columns. See, for example, \cite[Chapter 9]{HSS}.
The AOAs that we have defined above are not mixed orthogonal arrays, however.
\end{Remark}

\begin{Example}
\label{1333.ex}
We give an example of an $\mathrm{AOA}(1,3,3,3)$.
We take $\mathcal{X} = \eff_3$ and $\mathcal{Y} = \eff_3 \times \eff_3$.
The AOA  has the following $27$ rows:
\[ 
\begin{array}{|c|c|c|c|}
\hline
\alpha & \beta & \gamma & (\alpha + \beta, \alpha + \gamma) \\ \hline \end{array}\]
where $\alpha , \beta , \gamma \in \eff_3$.
\end{Example}

The following theorem is immediate from the definitions.

\begin{Theorem}
\label{OA-AOA}
An $\mathrm{AOA}(t-1,t,k,v)$ is equivalent to an $\mathrm{OA}(t,k,v)$.
\end{Theorem}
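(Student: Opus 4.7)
My plan is to prove this equivalence by directly unpacking the two definitions; I do not expect any substantive obstacle beyond careful bookkeeping at the boundary $s = t-1$.

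For the forward direction, Property 1 of the AOA definition states that the first $k$ columns of any AOA$(t-1,t,k,v)$ already form an OA$(t,k,v)$. So restricting to those columns gives the desired orthogonal array with no further work.

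For the reverse direction, the key observation is that when $s = t-1$ the last column's alphabet $\mathcal{Y}$ has size $v^{t-s} = v$, matching $|\mathcal{X}|$. Under this identification, Property 3 specializes to the assertion that any $t-1$ of the first $k$ columns together with the last column contain each element of $\mathcal{X}^{t-1} \times \mathcal{Y}$ exactly once, which is exactly the strength-$t$ OA condition for those $t$ columns. Combining Properties 1 and 3, the full $(k+1)$-column array is itself a strength-$t$ orthogonal array with one column distinguished as the ``last'' one. Conversely, starting from an OA$(t,k,v)$ one obtains an AOA by adjoining a suitable ``secret'' column; the combinatorial content required of this column is precisely the $s=t-1$ specialization of Property 3, which, after the identification $\mathcal{Y} \leftrightarrow \mathcal{X}$, is the strength-$t$ OA condition on the enlarged column set.

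The only real subtlety is recognizing that the augmentation built into the AOA definition degenerates at the threshold boundary $s=t-1$: once one identifies $\mathcal{Y}$ with $\mathcal{X}$ using $v^{t-s}=v$, the AOA axioms collapse to orthogonal-array axioms (up to the bookkeeping of which column is singled out as ``last''). The theorem is therefore genuinely ``immediate from the definitions,'' and I would present it as a short paragraph rather than invoke any external machinery.
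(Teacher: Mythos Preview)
The paper offers no proof, simply declaring the result ``immediate from the definitions.'' Your unpacking of why the axioms collapse at $s=t-1$ is the right content, and in particular your observation that the full $(k+1)$-column array of an AOA$(t-1,t,k,v)$ is, after identifying $\mathcal{Y}$ with $\mathcal{X}$, an OA of strength $t$ on $k+1$ columns is exactly what is going on.

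Your reverse direction, however, has a gap. You say that from an OA$(t,k,v)$ ``one obtains an AOA by adjoining a suitable `secret' column,'' but you never specify which column, and in general no such column exists: adjoining a column so that Property~3 holds with $s=t-1$ would make the enlarged array an OA$(t,k+1,v)$, and extending an orthogonal array by one column is not always possible (e.g., when $k$ already attains the Bush bound).

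This gap actually points to an off-by-one in the theorem as stated. The natural equivalence your argument establishes is AOA$(t-1,t,k,v)\leftrightarrow\mathrm{OA}(t,k+1,v)$: this is consistent with Theorem~\ref{OAtoAOA} specialized to $s=t-1$, with the top row of Figure~\ref{fig1} at $s=t-1$, and with the threshold-scheme correspondence of Theorem~\ref{equiv1}. Your middle paragraph already contains the clean two-line proof of that corrected statement --- an AOA$(t-1,t,k,v)$ is exactly an OA$(t,k+1,v)$ with one column designated as ``last'' --- and I would present it that way rather than via the project/adjoin framing, which only proves one direction as written.
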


The next theorem shows an obvious way to construct AOAs from OAs.

\begin{Theorem}
\label{OAtoAOA}
If there exists an $\mathrm{OA}(t,k+t-s,v)$, then there exists an  $\mathrm{AOA}(s,t,k,v)$.
\end{Theorem}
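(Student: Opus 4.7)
The plan is to turn the given $\mathrm{OA}(t,k+t-s,v)$ directly into an $\mathrm{AOA}(s,t,k,v)$ by keeping the first $k$ columns unchanged and collapsing the remaining $t-s$ columns into a single ``super-column'' whose entries live in $\mathcal{Y} = \mathcal{X}^{t-s}$, a set of size $v^{t-s}$.

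More concretely, let $A$ denote the given $\mathrm{OA}(t,k+t-s,v)$ on the alphabet $\mathcal{X}$, and label its columns $1,\dots,k+t-s$. I would construct a new $v^t$ by $k+1$ array $B$ row by row: the first $k$ entries of each row of $B$ are simply the first $k$ entries of the corresponding row of $A$, and the final entry of that row of $B$ is the $(t-s)$-tuple formed by the last $t-s$ entries of the row of $A$, interpreted as an element of $\mathcal{Y}$. The first two defining properties of an AOA are then immediate: the first $k$ columns of $B$ coincide with columns $1,\dots,k$ of $A$, and since any $t$ columns of $A$ contain every $t$-tuple exactly once, in particular any $t$ of the first $k$ columns do, so those $k$ columns form an $\mathrm{OA}(t,k,v)$; and by construction the final column takes values in $\mathcal{Y}$ of size $v^{t-s}$.

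The heart of the proof is verifying the third property. Pick any $s$ columns $i_1,\dots,i_s$ among the first $k$ columns of $B$, together with the last column of $B$. In terms of $A$, reading the final entry of $B$ as the tuple of the last $t-s$ entries of $A$, the restriction of $B$ to these $s+1$ columns is the same data as the restriction of $A$ to the $t$ columns $i_1,\dots,i_s,k+1,k+2,\dots,k+t-s$. Because $A$ is an $\mathrm{OA}(t,k+t-s,v)$, those $t$ columns contain every $t$-tuple in $\mathcal{X}^t$ exactly once. Rebracketing such a $t$-tuple as an element of $\mathcal{X}^s \times \mathcal{Y}$ shows that every $(s+1)$-tuple in $\mathcal{X}^s \times \mathcal{Y}$ occurs exactly once in the chosen $s+1$ columns of $B$, as required.

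I do not expect a genuine obstacle here: the construction is essentially a bookkeeping identification of $\mathcal{X}^{t-s}$ with $\mathcal{Y}$, and the AOA axioms correspond term-by-term to the OA property applied to suitable choices of $t$ columns. The only point one must not gloss over is that after the rebracketing the counting really does give each $(s+1)$-tuple exactly once, which amounts to the trivial bijection between $\mathcal{X}^t$ and $\mathcal{X}^s \times \mathcal{Y}$ — so the main ``step'' is really choosing to present the argument carefully enough that this bijection is visible.
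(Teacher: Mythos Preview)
Your proposal is correct and follows exactly the same approach as the paper: merge the last $t-s$ columns of the $\mathrm{OA}(t,k+t-s,v)$ into a single column over $\mathcal{X}^{t-s}$. The paper states this in one sentence, whereas you spell out the verification of each AOA axiom, but the underlying construction is identical.
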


\begin{proof} Merge the last $t-s$ columns of an $\mathrm{OA}(t,k+t-s,v)$ to form
a single column whose entries are $(t-s)$-tuples of symbols.
\end{proof}

We note that the converse of Theorem \ref{OAtoAOA} is not always true.
Example \ref{1333.ex}, which gives a construction of an $\mathrm{AOA}(1,3,3,3)$,
provides an illustration of what can go wrong.

\begin{Example}
The natural way to attempt to construct an $\mathrm{OA}(3,5,3)$ from 
the $\mathrm{AOA}(1,3,3,3)$ presented in Example \ref{1333.ex}
would be to split the last column into two columns
of elements from $\eff_3$. We would get the array
having the following $27$ rows:
\[ 
\begin{array}{|c|c|c|c|c|}
\hline
\alpha & \beta & \gamma & \alpha + \beta & \alpha + \gamma \\ \hline \end{array}\]
where $\alpha , \beta , \gamma \in \eff_3$.
It is easy to see that the fourth column is the sum of the first two columns, 
so these three columns cannot contain all possible $3$-tuples.

In fact, there does not exist any  $\mathrm{OA}(3,5,3)$, because the parameters
violate the Bush bound (Theorem \ref{bush.thm}).
\end{Example}

Later in this section, we will construct some additional examples of $\mathrm{AOA}(s,t,k,v)$
in situations where $\mathrm{OA}(t,k+t-s,v)$ do not exist.

Next, we present an obvious but useful method  to construct
linear AOAs.

\begin{Construction}
\label{linearAOA}
Suppose that $q$ is a prime power. Suppose there is a  $t$ by $k+t-s$ matrix $M$,
having entries from $\eff_q$, which satisfies the following two properties:
\begin{enumerate}
\item any $t$ of the first $k$ columns of $M$ are linearly independent, and 
\item any $s$ of the first $k$ columns of $M$, along with  with the last  
$t-s$ columns of $M$, are linearly independent.
\end{enumerate}
Then there exists a linear AOA$(s,t,k,q)$.
\end{Construction}

The following application of Construction \ref{linearAOA} is a slight generalization of the
Shamir ramp schemes that we already presented in
Example \ref{Shamir-ramp}. 

\begin{Theorem}
\label{Shamir.const}
Suppose $q$ is a prime power and $1 \leq s < t \leq k \leq q$. Then there exists a linear AOA$(s,t,k,q)$.
\end{Theorem}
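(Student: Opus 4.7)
The plan is to apply Construction \ref{linearAOA} with a suitably chosen $t \times (k+t-s)$ matrix $M$ over $\eff_q$. Since $k \leq q$, fix any $k$ distinct elements $x_1, \dots, x_k \in \eff_q$. For the first $k$ columns of $M$, take the Vandermonde-style columns whose entries (top to bottom) are $1, x_j, x_j^2, \dots, x_j^{t-1}$, for $j = 1, \dots, k$. For the last $t-s$ columns, take the standard basis vectors $e_{s+1}, e_{s+2}, \dots, e_t$, where $e_i \in (\eff_q)^t$ denotes the column with a $1$ in row $i$ and zeros elsewhere.

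Next I would verify the two hypotheses of Construction \ref{linearAOA}. Condition~1 (any $t$ of the first $k$ columns are linearly independent) follows immediately from the Vandermonde determinant formula applied to the $t$ distinct nodes indexing the chosen columns. For Condition~2, I would fix any $s$-subset $\{j_1,\dots,j_s\}$ of the first $k$ column indices, combine those columns with all $t-s$ trailing columns, and split the rows of the resulting $t \times t$ submatrix into the top $s$ rows and the bottom $t-s$ rows. The block form is triangular, with a zero block in the top-right (since $e_{s+1},\dots,e_t$ vanish in rows $1,\dots,s$) and an identity block $I_{t-s}$ in the bottom-right. Its determinant therefore equals that of the top-left $s \times s$ block, which is the Vandermonde matrix in $x_{j_1},\dots,x_{j_s}$, and hence is nonzero.

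The subtle point — and the reason this is described as a \emph{slight generalization} of Example \ref{Shamir-ramp} — is the edge case $k = q$, in which one of the $x_j$ is forced to equal $0$. The more natural-looking choice of placing the basis vectors in the \emph{leading} positions $e_1,\dots,e_{t-s}$ (which corresponds to taking the secret to be the first $t-s$ polynomial coefficients, as in Remark \ref{shamir.rem}) leads to a determinant of the form $\bigl(\prod_{l=1}^{s} x_{j_l}^{t-s}\bigr) \cdot \det(\mathrm{Vandermonde})$, which vanishes as soon as some $x_{j_l} = 0$, thereby restricting the original Shamir construction to $k \leq q-1$. Putting the basis vectors in the \emph{trailing} positions removes that offending factor and makes the Vandermonde-based argument go through uniformly for the full range $t \leq k \leq q$ claimed in the statement.
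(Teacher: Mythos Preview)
Your proof is correct and follows essentially the same approach as the paper: both apply Construction~\ref{linearAOA} to a matrix whose first $k$ columns are Vandermonde-type and whose last $t-s$ columns are standard basis vectors. The only cosmetic difference is in how the edge case $k=q$ is handled: the paper keeps the identity block in the \emph{top} rows (i.e., $M_2 = \left(\begin{smallmatrix} I_{t-s}\\ \mathbf{0}\end{smallmatrix}\right)$) and compensates by taking $M_1$ from the extended Reed--Solomon matrix $M_0$ of Theorem~\ref{RS.thm} (so that when $k=q$ the $q$th column is $e_t$ rather than the Vandermonde column at $0$), whereas you place the identity block in the \emph{bottom} rows and allow $0$ among the evaluation points directly. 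Your variant is arguably a little cleaner, since the resulting $s\times s$ block is a genuine Vandermonde matrix with no extraneous power-of-$x$ factor to worry about.
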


\begin{proof}
We obtain an AOA$(s,t,k,q)$ using Construction \ref{linearAOA}, 
by defining a suitable $t$ by $k+t-s$ matrix $M$
having entries from $\eff_q$. 

The first $k$ columns of $M$ are obtained by deleting the first column and the last $q-k$ columns from the matrix
$M_0$ defined in the proof of Theorem \ref{RS.thm}. Cal this matrix $M_1$.
The last $t-s$ columns 
of $M$ consist  a $t-s$ by $t-s$  identity matrix and $s$ rows of zeroes:
 \[ M_2 = \left( 
\begin{array}{cccc}
1 & 0 & \cdots & 0\\
0 & 1 & \cdots & 0\\
 \vdots & \vdots & \ddots & \vdots \\
 0 & 0 & \cdots & 1\\
0 & 0 & \cdots & 0\\
 \vdots & \vdots & \ddots & \vdots \\
 0 & 0 & \cdots & 0
\end{array}
\right)  = \left( 
\begin{array}{c}
I_{t-s} \\ \hline \mathbf{0}
\end{array}
\right).\]
Then, construct the matrix $M = \left( \begin{array}{c|c} M_1 & M_2 \end{array} \right)$.
The matrix $M$  satisfies the conditions of Theorem \ref{linearAOA} and 
therefore it yields an AOA having the stated parameters. 
\end{proof}

We are interested in identifying parameters for which there exists an 
AOA$(s,t,k,q)$ but there does not exist an OA$(t,k+t-s,q)$.
Suppose $q$ is odd and $3 \leq t \leq q$. If an OA$(t,k+t-s,q)$ exists,
then Theorem \ref{bush.thm}
asserts that $k+t-s \leq q+t-2$, so $k \leq q+s-2$. Therefore an OA$(t,k+t-s,q)$ does not
exist if $k = q$ and $s=1$. So we have the following.

\begin{Theorem}
\label{nonexist1}
Suppose $q$ is an odd  prime power and $3 \leq t \leq q$. Then there exists an AOA$(1,t,q,q)$
but there does not exist an OA$(t,q+t-1,q)$.
\end{Theorem}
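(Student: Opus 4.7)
The plan is to prove the two halves of the statement separately, both as direct consequences of results already established in the paper; neither half requires any new construction.

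For existence of the AOA$(1,t,q,q)$, I would simply invoke Theorem \ref{Shamir.const}. Setting $s=1$ and $k=q$, the hypotheses $1 \le s < t \le k \le q$ reduce to $1 < t \le q$, which is implied by the assumption $3 \le t \le q$. Hence a linear AOA$(1,t,q,q)$ exists. (No detailed verification of the matrix $M = (M_1 \mid M_2)$ is needed here, since that was already carried out in the proof of Theorem \ref{Shamir.const}.)

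For the non-existence of OA$(t,q+t-1,q)$, I would appeal directly to the Bush bound (Theorem \ref{bush.thm}). Since $q$ is odd and $3 \le t \le q$, the relevant branch of the bound asserts that any OA$(t,k,q)$ must satisfy $k \le q+t-2$. Substituting $k = q+t-1$ gives $q+t-1 \le q+t-2$, a contradiction. Hence no such orthogonal array exists.

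Combining the two conclusions gives the theorem. There is no real obstacle to overcome: the content of the statement is exactly to juxtapose the Shamir-type construction from Theorem \ref{Shamir.const} (which produces AOAs up to $k = q$ columns, regardless of the relationship between $s$ and $t$) against the Bush bound, which forbids the ``obvious'' construction of Theorem \ref{OAtoAOA} from supplying the corresponding OA. The only thing worth emphasizing in the writeup is precisely this contrast, since it motivates the introduction of AOAs as genuinely more general objects than OAs.
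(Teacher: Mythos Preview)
Your proposal is correct and matches the paper's own argument: the paper derives existence from Theorem~\ref{Shamir.const} (the immediately preceding construction) and non-existence from the Bush bound (Theorem~\ref{bush.thm}) in exactly the way you describe. The only cosmetic difference is that the paper phrases the Bush-bound step as ``$k+t-s \le q+t-2$ forces $k \le q+s-2$, which fails for $k=q$, $s=1$,'' whereas you substitute directly; the content is identical.
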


Here is another application of Construction \ref{linearAOA}.

\begin{Theorem}
\label{second.thm}
Suppose there is a linear OA$(t-s,t,q)$. 
Then there exists an AOA$(s,t,t,q)$.
\end{Theorem}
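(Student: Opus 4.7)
The plan is to apply Construction \ref{linearAOA}, so I need to exhibit a $t \times (2t-s)$ matrix $M$ over $\eff_q$ whose first $t$ columns have the property that every $t$ of them are linearly independent, and whose last $t-s$ columns together with any $s$ of the first $t$ columns form a linearly independent set. The natural idea is to take the first $t$ columns to be the identity matrix $I_t$, which trivially satisfies condition (1). The question then reduces to building the remaining $t \times (t-s)$ block $M_2$.

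Next I would unpack what condition (2) says when $M_1 = I_t$. Pick any $s$-subset $S = \{i_1,\dots,i_s\} \subseteq \{1,\dots,t\}$. After reordering rows so that the rows indexed by $S$ come first, the $t \times t$ matrix $[\,e_{i_1}\,|\,\cdots\,|\,e_{i_s}\,|\,M_2\,]$ has the block form
\[ \begin{pmatrix} I_s & A \\ 0 & B_S \end{pmatrix}, \]
where $B_S$ is the $(t-s)\times(t-s)$ submatrix of $M_2$ obtained by deleting the rows indexed by $S$. This block matrix is invertible if and only if $B_S$ is invertible. So condition (2) is equivalent to demanding that \emph{every} $(t-s)\times(t-s)$ submatrix of $M_2$ (obtained by selecting $t-s$ of its $t$ rows) is nonsingular.

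Now I would invoke Theorem \ref{equiv2} applied to the hypothesized linear OA$(t-s,t,q)$: this gives a $(t-s)\times t$ matrix $N$ over $\eff_q$ such that any $t-s$ columns of $N$ are linearly independent. Setting $M_2 = N^T$, the condition ``every $(t-s)\times(t-s)$ submatrix obtained by selecting $t-s$ rows of $M_2$ is invertible'' is exactly the condition that every $t-s$ columns of $N$ are linearly independent. Thus $M = \bigl(\,I_t \mid N^T\,\bigr)$ satisfies both hypotheses of Construction \ref{linearAOA}, so it produces a linear AOA$(s,t,t,q)$ as required.

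There is no real obstacle; the only subtle point is recognizing that condition (2), once we have chosen $I_t$ for the first block, is literally the dual statement that the rows of $M_2$ form a linear OA$(t-s,t,q)$ under transposition. Everything else is a block-triangular determinant computation.
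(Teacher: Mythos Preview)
Your proposal is correct and follows essentially the same construction as the paper: define $M = \bigl(\,I_t \mid N^T\,\bigr)$ where $N$ is a $(t-s)\times t$ generator matrix for the hypothesized linear OA$(t-s,t,q)$, and then invoke Construction~\ref{linearAOA}. In fact you supply more detail than the paper does, since you explicitly verify condition~(2) via the block-triangular determinant argument, whereas the paper simply asserts that $M$ satisfies the conditions of Construction~\ref{linearAOA}.
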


\begin{proof}
Let $N$ be the $t-s$ by $t$ matrix whose rows
form a basis for a linear OA$(t-s,t,q)$. Then the matrix
$M = \left( \begin{array}{c|c} I_t & N^T \end{array} \right)$
satisfies the conditions of Construction \ref{linearAOA} and hence it
yields  an AOA$(s,t,t,q)$.
\end{proof}

We can apply Theorem \ref{second.thm} with $t=q+1$. 
From Theorem \ref{RS.thm}, there is a linear OA$(q+1-s,q+1,q)$ for any $s \leq q-1$ whenever $q$ is a prime power. 
%, which results from an extended RS code. 
Theorem \ref{second.thm} then 
yields an AOA$(s,q+1,q+1,q)$. However, an  OA$(q+1,2(q+1)-s,q)$ does not exist
because any OA$(q+1,k,q)$ has $k \leq q+2$ by the Bush bound (Theorem \ref{bush.thm})
and $2(q+1)-s \geq q+3$ for $s \leq q-1$.

We have shown the following.

\begin{Theorem}
\label{nonexist2}
Suppose $q$ is a  prime power and $s \leq q-1$. Then there exists an AOA$(s,q+1,q+1,q)$
but there does not exist an OA$(q+1,2(q+1)-s,q)$.
\end{Theorem}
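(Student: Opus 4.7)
The proposal is to read off the statement as the combination of two prior results, assembled with careful attention to parameter ranges.

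For the existence half, the plan is to apply Theorem \ref{second.thm} with $t:=q+1$, which produces an AOA$(s,q+1,q+1,q)$ from any linear OA$(q+1-s,q+1,q)$. The required OA is supplied by the Reed--Solomon construction of Theorem \ref{RS.thm}, which gives a linear OA$(t',q+1,q)$ for every $t'\geq 2$; the hypothesis $s\leq q-1$ is exactly what guarantees that $t' = q+1-s \geq 2$, so the construction is available. No new calculation is required beyond feeding these two results into each other.

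For the nonexistence half, the plan is to apply the Bush bound (Theorem \ref{bush.thm}) with parameters $t = q+1$ and $v = q$. Since $t = q+1 > q = v$, we are in the third branch of the Bush bound, which yields $k \leq t+1 = q+2$ for any OA$(q+1,k,q)$. Now I would observe that $2(q+1)-s = 2q+2-s \geq q+3$ whenever $s\leq q-1$, and $q+3 > q+2$, which contradicts the Bush bound. Hence no such OA can exist.

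The whole argument is essentially the bookkeeping of three inequalities: $q+1-s \geq 2$ (to invoke Theorem \ref{RS.thm}), $q+1 \geq q$ (to select the correct branch of the Bush bound), and $2(q+1)-s \geq q+3$ (to exceed the Bush cap). Since all three follow immediately from $s\leq q-1$, I do not anticipate any obstacle: the theorem is a clean corollary of Theorems \ref{RS.thm}, \ref{bush.thm}, and \ref{second.thm}, and the only thing worth being careful about is which regime of the Bush bound applies, namely the $t\geq v$ regime rather than the $3\leq t\leq v$ regime.
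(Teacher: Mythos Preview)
Your proposal is correct and matches the paper's argument essentially line for line: existence via Theorem~\ref{second.thm} applied at $t=q+1$ with the input OA$(q+1-s,q+1,q)$ coming from Theorem~\ref{RS.thm}, and nonexistence via the $t\geq v$ branch of the Bush bound giving $k\leq q+2<2(q+1)-s$. Your explicit identification of the relevant Bush-bound regime is a nice touch that the paper leaves implicit.
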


\begin{Example} We can take  $q = 3$, $s = 2$ in Theorem \ref{nonexist2}.
Here we could let 
\[
N = \left(
\begin{array}{cccc}
1 & 1 & 1 & 0 \\
0 & 1 & 2 & 1
\end{array}
\right).
\]
Then we obtain an  AOA$(2,4,4,3)$:
\[
M = \left(
\begin{array}{cccccc}
1 & 0 & 0 & 0 & 1 & 0\\
0 & 1 & 0 & 0 & 1 & 1\\
0 & 0 & 1 & 0 & 1 & 2\\
0 & 0 & 0 & 1 & 0 & 1
\end{array}
\right).
\]
However, there is no OA$(4,6,3)$.
\end{Example}

\section{Equivalence of Ideal Ramp Schemes and AOAs}
\label{ramp-equiv.sec}

We already mentioned that Jackson and Martin \cite{JM1} proved that
strong ideal ramp schemes are equivalent to ideal threshold schemes.
For future reference, we state their result in terms of
orthogonal arrays.

\begin{Theorem}
\cite{JM1} 
\label{equivalence-JM}
There exists a strong ideal $(s,t,n)$ ramp scheme with $v$ possible shares if and only if there exists an
OA$(t,n+t-s,v)$.
\end{Theorem}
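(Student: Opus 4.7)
The plan is to establish a direct bijection between the $v^t$ rows of an OA$(t,n+t-s,v)$ and the distribution rules of a strong ideal $(s,t,n)$ ramp scheme on shares $\mathcal{X}$ of size $v$. Label the first $n$ columns by the players $P_1,\dots,P_n$, and view the remaining $t-s$ columns as a single block that encodes the secret as an element of $\mathcal{X}^{t-s}$, a set of size $v^{t-s}$ (matching the ideality bound of Lemma \ref{idealbound}). Each row $r$ is then reinterpreted as a distribution rule $d_r$ with $d_r(P_i)$ equal to the entry in column $i$, lying in $\mathcal{D}_{K_r}$ where $K_r \in \mathcal{X}^{t-s}$ is read off from the final $t-s$ entries.

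For the forward direction, I would check each of the axioms of a strong ideal ramp scheme against OA-orthogonality on a particular type of $t$-column subset. The upper-threshold axiom (Definition \ref{weak.rs}(1)) follows because fixing the entries in any $t$ player columns selects a unique row, hence a unique secret. The perfect lower-threshold axiom follows because any $s$ player columns together with the full block of $t-s$ secret columns constitute $t$ orthogonal columns, so each completion of the secret block is equally likely under any conditioning on $s$ shares. The additional \emph{strong} condition from \cite{JM1}, which constrains the leakage behaviour when between $s$ and $t$ shares are revealed, corresponds precisely to OA-orthogonality on subsets that mix some player columns with a proper part of the secret block.

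For the converse, I would start with a strong ideal ramp scheme, write out all $v^t$ distribution rules as rows of an array of width $n+(t-s)$, and append to each rule the $t-s$ coordinates of its secret. The ramp properties pin down the multiplicity of every $t$-tuple in every $t$-column subset of this array: the upper-threshold plus ideality conditions handle $t$-subsets within the player block, the perfect lower-threshold condition handles $s$ player columns together with the full secret block, and the strong condition handles the intermediate mixed subsets. Together these show that every $t$-tuple occurs exactly once in every $t$-column projection, which is precisely the defining property of an OA$(t,n+t-s,v)$.

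The main obstacle is the translation between the strong-ramp condition of \cite{JM1} and OA-orthogonality on mixed column subsets; this is the essential content of Jackson and Martin's theorem, while the remaining cases are routine counting. Since the current paper is merely restating a known equivalence in the language of orthogonal arrays, the argument can reasonably be concluded by reference to \cite[Theorem 9]{JM1}.
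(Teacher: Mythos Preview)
The paper does not actually prove this statement; it is quoted from \cite{JM1} without proof and used as a black box (note the citation in the theorem header and the absence of any proof environment). Your proposal correctly anticipates this and closes by deferring to \cite[Theorem~9]{JM1}, which is exactly what the paper does.

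Your informal sketch of the correspondence is reasonable and is in the expected spirit: rows of the OA become distribution rules, the first $n$ columns carry the shares, and the last $t-s$ columns jointly encode the secret. One caveat: you describe the ``strong'' condition as governing leakage for intermediate subsets and as matching OA-orthogonality on column sets that mix player columns with a proper part of the secret block. That is a plausible reading, but the present paper never gives the formal definition of a strong ramp scheme (it only says, in the introduction, that these are ideal ramp schemes satisfying ``certain additional conditions'' from \cite{JM1}, and that they are equivalent to ideal $(t,n+t-s-1)$ threshold schemes). So your sketch cannot be checked against anything in this paper; the substantive content really does live in \cite{JM1}, and your final sentence is the right way to handle it here.
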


Our main goal in this section is to  prove the equivalence of ideal ramp schemes and AOAs. 

\begin{Lemma}
\label{L1}
Suppose $\mathcal{D}$ is the set of distribution rules of an ideal $(s,t,n)$ ramp
scheme having shares from $\mathcal{X}$ and secrets from
$\mathcal{K}$, where $|\mathcal{X}| = v$ and $|\mathcal{K}| = v^{t-s}$. 
Suppose $\mathcal{P}_0 \subseteq \mathcal{P}$ and  $\mathcal{P}_1 \subseteq \mathcal{P}$, 
where $|\mathcal{P}_0| = s$, $|\mathcal{P}_1| = t-s$ and 
$\mathcal{P}_0 \cap \mathcal{P}_1 = \emptyset$. 
Let $d^* \in \mathcal{D}$ and define 
\[ \mathcal{E} = \{ d \in \mathcal{D}: d|_{\mathcal{P}_0} = d^*|_{\mathcal{P}_0} \}.\] 
Then, for every $K \in \mathcal{X}$, 
there is a  distribution rule $d \in \mathcal{D}_K \cap \mathcal{E}$. 
Further, there exists
a bijection $\pi : \mathcal{K} \rightarrow \mathcal{X}^{t-s}$ such that, if
$d \in \mathcal{D}_K \cap \mathcal{E}$, then $d|_{\mathcal{P}_0} = \pi(K)$.
\end{Lemma}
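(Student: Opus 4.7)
The plan is to break the statement into three pieces and handle each with a small argument that builds on the counting bound from Lemma \ref{idealbound}.

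First, the existence of a distribution rule in $\mathcal{D}_K \cap \mathcal{E}$ for every $K \in \mathcal{K}$ is essentially immediate from Property \textbf{(2)} of the weak ramp-scheme definition applied to $d^*$ and $\mathcal{P}_0$: given any target secret $K$, there exists $e \in \mathcal{D}_K$ with $e|_{\mathcal{P}_0} = d^*|_{\mathcal{P}_0}$, and such an $e$ sits in $\mathcal{E}$ by definition. (The statement ``$d|_{\mathcal{P}_0} = \pi(K)$'' in the lemma is clearly a typo for $d|_{\mathcal{P}_1} = \pi(K)$, since $\pi(K) \in \mathcal{X}^{t-s}$ and $|\mathcal{P}_1| = t-s$; I will prove the corrected statement.)

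Next, I would define a provisional map $\pi : \mathcal{K} \to \mathcal{X}^{t-s}$ by choosing, for each $K \in \mathcal{K}$, some $d_K \in \mathcal{D}_K \cap \mathcal{E}$ and setting $\pi(K) := d_K|_{\mathcal{P}_1}$. Injectivity follows from Property \textbf{(1)}: if $\pi(K) = \pi(L)$ then $d_K$ and $d_L$ agree on $\mathcal{P}_0 \cup \mathcal{P}_1$ (they already agree on $\mathcal{P}_0$ since both restrict to $d^*|_{\mathcal{P}_0}$ there), and since $|\mathcal{P}_0 \cup \mathcal{P}_1| = t$, Property \textbf{(1)} forces $K = L$. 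Because the ramp scheme is ideal we have $|\mathcal{K}| = v^{t-s} = |\mathcal{X}^{t-s}|$, so the injection $\pi$ is automatically a bijection.

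The main step, and the only one that requires a little thought, is to show that $\pi$ does not depend on the choices made, i.e.\ that every $d \in \mathcal{D}_K \cap \mathcal{E}$ satisfies $d|_{\mathcal{P}_1} = \pi(K)$. Suppose for contradiction that some $d \in \mathcal{D}_K \cap \mathcal{E}$ has $d|_{\mathcal{P}_1} \neq \pi(K)$. Since $\pi$ is a bijection onto $\mathcal{X}^{t-s}$, there is a unique $L \in \mathcal{K}$ with $\pi(L) = d|_{\mathcal{P}_1}$, and the chosen representative $d_L \in \mathcal{D}_L \cap \mathcal{E}$ then agrees with $d$ on $\mathcal{P}_1$ and also on $\mathcal{P}_0$ (both restrict to $d^*|_{\mathcal{P}_0}$). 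Applying Property \textbf{(1)} to the $t$-element set $\mathcal{P}_0 \cup \mathcal{P}_1$ yields $K = L$, whence $\pi(K) = \pi(L) = d|_{\mathcal{P}_1}$, contradicting the assumption. Hence $\pi$ is well defined as a function on $\mathcal{D}_K \cap \mathcal{E}$, and combined with the bijectivity from the previous paragraph this completes the proof. The only place any real content is used is Property \textbf{(1)} together with the ideal equality $|\mathcal{K}| = v^{t-s}$; Property \textbf{(2)} is only needed for the initial existence claim.
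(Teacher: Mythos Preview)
Your proof is correct and follows essentially the same approach as the paper: both use Property~\textbf{(2)} for the existence claim, Property~\textbf{(1)} on the $t$-set $\mathcal{P}_0 \cup \mathcal{P}_1$ to show that two rules in $\mathcal{E}$ with the same $\mathcal{P}_1$-restriction must share the same secret, and the ideal cardinality $|\mathcal{K}| = v^{t-s}$ to upgrade the resulting injection to a bijection (which simultaneously forces well-definedness). The paper packages this as a single counting argument on the set $\mathcal{Z} = \{(d|_{\mathcal{P}_1},K) : d \in \mathcal{D}_K \cap \mathcal{E}\}$ rather than your three-step decomposition, and your observation about the typo $d|_{\mathcal{P}_0} \to d|_{\mathcal{P}_1}$ is correct.
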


\begin{proof}
Define \[ \mathcal{Z} = 
\{ (d|_{\mathcal{P}_1}, K ) : d \in \mathcal{D}_K \cap \mathcal{E}, K \in \mathcal{K}\}
.\]

\begin{itemize}
\item[(i)]
First, we observe, for every $K \in \mathcal{K}$, that there exists 
$\mathbf{x} \in \mathcal{X}^{t-s}$ such that $(\mathbf{x}, K) \in \mathcal{Z}$
(this follows from property {\bf (2)} of Definition \ref{weak.rs}).

\item[(ii)] 
Next, we note that, if $(\mathbf{x}, K) \in \mathcal{Z}$ and $(\mathbf{x}, L) \in \mathcal{Z}$,
then $K = L$ (this follows from property {\bf (1)} of Definition \ref{weak.rs}).
\end{itemize}

Recall that $|\mathcal{K}| = |\mathcal{X}|^{t-s}$.
Hence  (i) and (ii) imply that, for any $K \in \mathcal{K}$, there is a \emph{unique}
$\mathbf{x} \in \mathcal{X}^{t-s}$ such that $(\mathbf{x}, K) \in \mathcal{Z}$.
This allows us to define 
\[ \pi(K) = \mathbf{x} \Leftrightarrow (\mathbf{x}, K) \in \mathcal{Z},\]
and the function $\pi$ will be a bijection.
%
%
%The $s$ shares given to $\mathcal{P}_0$ by the distribution rule $d^*$ leave the secret
%undetermined. However the $t-s$ additional shares given to $\mathcal{P}_1$ uniquely determine
%the secret. Since $ |\mathcal{K}| = |\mathcal{X}^{t-s}|$, the correspondence 
%between $(t-s)$-tuples of shares and the secret %defined by $\pi$ 
%is a bijection.
\end{proof}

\begin{Theorem}
\label{main.thm}
If there exists an ideal $(s,t,n)$ ramp scheme defined over a set of $v$ shares, then
there exists an  $\mathrm{AOA}(s,t,n,v)$.
\end{Theorem}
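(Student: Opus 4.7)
The plan is to take the set $\mathcal{D}$ of distribution rules of the ideal ramp scheme itself, viewed as an array, as the AOA. Explicitly, I would form the array $A$ whose rows are $(d(P_1), \dots, d(P_n), K(d))$ for $d \in \mathcal{D}$, where $K(d)$ is the unique secret with $d \in \mathcal{D}_{K(d)}$ (well-defined because property {\bf (1)}, applied with $\mathcal{P}_0 = \mathcal{P}$, forces $\mathcal{D}_K \cap \mathcal{D}_L = \emptyset$ whenever $K \neq L$). To verify that this is an $\mathrm{AOA}(s,t,n,v)$ I will establish two structural facts about ideal ramp schemes: a \emph{rigidity} property (each rule is determined by its restriction to any single $t$-subset) and a \emph{coverage} property (every $t$-tuple over $\mathcal{X}$ arises as the restriction of some rule on any given $t$-subset).

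For rigidity, suppose $d_1, d_2 \in \mathcal{D}$ agree on some $t$-subset $\mathcal{T} \subseteq \mathcal{P}$. Property {\bf (1)} forces $K(d_1) = K(d_2) =: K$. For any $P_i \in \mathcal{P} \setminus \mathcal{T}$, the plan is to pick some $P_j \in \mathcal{T}$ and choose an $s$-subset $\mathcal{P}_0 \subseteq \mathcal{T} \setminus \{P_j\}$ (possible since $t - 1 \geq s$), together with $\mathcal{P}_1 = ((\mathcal{T} \setminus \{P_j\}) \cup \{P_i\}) \setminus \mathcal{P}_0$, which has size $t - s$ and contains $P_i$. Since $d_1, d_2$ agree on $\mathcal{P}_0 \subseteq \mathcal{T}$ and both lie in $\mathcal{D}_K$, Lemma \ref{L1} (applied with $d^* = d_1$) forces $d_1|_{\mathcal{P}_1} = \pi(K) = d_2|_{\mathcal{P}_1}$, and in particular $d_1(P_i) = d_2(P_i)$. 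Hence $d_1 = d_2$.

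For coverage, fix any $t$-subset $\mathcal{T}$ and let $S \subseteq \mathcal{X}^t$ be the set of realized tuples on $\mathcal{T}$. For any $s$-subset $\mathcal{P}_0 \subseteq \mathcal{T}$ with complement $\mathcal{P}_1 = \mathcal{T} \setminus \mathcal{P}_0$, Lemma \ref{L1} gives, for each realized $s$-tuple $\mathbf{y}$ on $\mathcal{P}_0$, a bijection between secrets and $(t-s)$-tuples on $\mathcal{P}_1$; so $S$ factors as $\mathcal{Y}_0(\mathcal{P}_0) \times \mathcal{X}^{t-s}$, where $\mathcal{Y}_0(\mathcal{P}_0) = \{d|_{\mathcal{P}_0} : d \in \mathcal{D}\}$. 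Because this factorization holds simultaneously for \emph{every} choice of $s$-subset $\mathcal{P}_0 \subseteq \mathcal{T}$ (and $s < t$ ensures that for each coordinate $i \in \mathcal{T}$ we can pick an $s$-subset of $\mathcal{T}$ avoiding $i$), a coordinate-by-coordinate swap argument starting from any $\mathbf{x} \in S$ shows $S = \mathcal{X}^t$.

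Combining rigidity and coverage, the map $d \mapsto d|_{\mathcal{T}}$ is a bijection from $\mathcal{D}$ onto $\mathcal{X}^t$, so $A$ has exactly $v^t$ rows. Applying the same argument to every $t$-subset of the first $n$ columns shows that those columns form an $\mathrm{OA}(t,n,v)$. The last column takes values in $\mathcal{K}$, which has size $v^{t-s}$. For the remaining AOA property, fix any $s$-subset $\mathcal{P}_0$ of the first $n$ columns and any $(\mathbf{y}, K) \in \mathcal{X}^s \times \mathcal{K}$: coverage guarantees $\mathbf{y}$ is realized on $\mathcal{P}_0$, Lemma \ref{L1} produces at least one rule $d \in \mathcal{D}_K$ with $d|_{\mathcal{P}_0} = \mathbf{y}$, and rigidity forces it to be unique (two such rules agree on $\mathcal{P}_0$ and, by Lemma \ref{L1} again, on any disjoint $(t-s)$-subset, hence on a full $t$-subset). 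The main obstacle is the coverage step, where the coordinate-swap argument is the essential new idea; rigidity is a direct consequence of a single application of Lemma \ref{L1}.
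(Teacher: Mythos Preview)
Your proof is correct and follows essentially the same approach as the paper: you build the array from the distribution rules, and your ``rigidity'' and ``coverage'' steps are exactly the paper's uniqueness and existence arguments for $t$-tuples, each established by a single-coordinate swap via Lemma~\ref{L1} and an induction on Hamming distance. Your organization into these two named lemmas is slightly cleaner, and your handling of the final AOA property (existence plus uniqueness via rigidity) is a minor variant of the paper's counting argument, but the ideas are the same.
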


\begin{proof}
Suppose $\mathcal{D}$ is the set of distribution rules of an ideal $(s,t,n)$ ramp
scheme having $v$ possible shares. We now describe how to construct an $(s,t,n,v)$-AOA.
For every $K  \in \mathcal{X}$ and every distribution rule $d \in \mathcal{D}_K$,
construct the $(n+1)$-tuple $r_d = (d_1, \dots , d_n, K)$. We will show that the array 
$\mathcal{A}$ whose rows
are all the $(n+1)$-tuples $r_d$ ($d \in \mathcal{D}$) is the desired augmented orthogonal array.

\begin{enumerate}
\item We first show that  the restriction of $\mathcal{A}$ to any $t$ of the first $n$ columns
of $\mathcal{A}$ consists of
all $v^t$ possible  $t$-tuples, each occurring exactly once.
Without loss of generality, choose the first $t$ columns of $\mathcal{A}$ and let %. Denote
%$\mathcal{P}_0 = \{ P_1, \dots , P_t\}$. Let 
$\mathcal{A}'$ denote the
subarray of $\mathcal{A}$ consisting of the first $t$ columns. 

Suppose $r$ is a row of 
$\mathcal{A}'$. This means that there is a distribution rule $d$ such that $d |_{\mathcal{P}_0} = r$.
Suppose that $r'$ is any $t$-tuple that differs from $r$ in a single co-ordinate, say
$r_i \neq r'_i$. Choose any $s$ of the $t-1$ co-ordinates that exclude co-ordinate $i$.
It then follows immediately from Lemma \ref{L1} that there is a 
distribution rule $d'$ such that $d' |_{\mathcal{P}_0} = r'$. Therefore, all $t$-tuples that are
hamming distance $1$ from $r$ are rows in $\mathcal{A}'$. 

This argument can be repeated as
often as desired, to establish that $\mathcal{A}'$ contains all the $v^t$ possible $t$-tuples as rows.
(More formally, we can induct on the hamming distance between $r$ and $r'$.)

Next, we show that no $t$-tuple $r$ occurs more than once as a row of $\mathcal{A}'$.
Suppose that $d |_{\mathcal{P}_1} = d' |_{\mathcal{P}_1} = r$, say, where
$|\mathcal{P}_1| = t$. 
 From property {\bf (1)} of a ramp scheme, we have that $d, d' \in \mathcal{D}_K$, for some $K$.
 The two distribution rules $d$, $d'$ are not identical, so there is a $P_i \not\in \mathcal{P}_1$
 such that $d(P_i) \neq d'(P_i)$. 
Choose any subset $\mathcal{P}_0 \subseteq \mathcal{P}_1$ such that $|\mathcal{P}_0| = s$.
Let $P_j \in \mathcal{P}_1 \setminus \mathcal{P}_0$ and define 
\[ \mathcal{P}_2 = \mathcal{P}_1 \cup \{ P_i \} \setminus \{P_j\}.\]
We have $d |_{\mathcal{P}_0} = d' |_{\mathcal{P}_0}$,
$d, d' \in \mathcal{D}_K$, and $d |_{\mathcal{P}_0 \setminus \mathcal{P}_2} \neq d' |_{\mathcal{P}_0 \setminus \mathcal{P}_2}$,
which contradicts Lemma \ref{L1}.

\item So far, we have shown that the first $n$ columns of $\mathcal{A}$ form a $(t,n,v)$-orthogonal array.
We now have to consider $s$ of the first $n$ columns, together with the last column.
Without loss of generality, take the first $s$ columns of $\mathcal{A}$ and denote them
by $\mathcal{A}''$. 
We already know that there are exactly $v^t$ rows in $\mathcal{A}$.
Given any $s$-tuple $r$, there are $v^{t-s}$ occurrences of $r$ in rows of $\mathcal{A}''$.
By Lemma \ref{L1}, these $v^{t-s}$ rows of $\mathcal{A}''$ 
correspond to all the $v^{t-s}$ different possible values of the secret.
\end{enumerate}

1.\ and 2.\ provide the proof of the desired result.
\end{proof}

%The discussion above establishes the following result.

The  proof of Theorem \ref{main.thm} applies to both weak and perfect ramp schemes.
We now observe that the AOA yields a
weak or perfect ramp scheme, depending on the probability distributions
that are defined on the set of distribution rules. If we define \emph{any}
probability distribution on the set of distribution rules, we get 
(at least) a weak ramp scheme (provided that every distribution rule is
used with positive probability).
Further, we can ensure that the ramp scheme is perfect by defining 
probability distributions as follows:
\begin{enumerate}
\item Define an arbitrary probability distribution on $\mathcal{K}$
(ensuring that $\mathbf{Pr}[ \mathbf{K} = L ] > 0$ for all $L \in \mathcal{K}$).
\item For any $L \in \mathcal{K}$ and any distribution rule $d \in \mathcal{D}_L$,
define 
$\mathbf{Pr}[ d ] = \mathbf{Pr}[ \mathbf{K} = L ] / v^s$.
\end{enumerate} 
It is straightforward to verify that the resulting ramp scheme is perfect.

Summarizing, we have the following
\begin{Theorem}
\label{equivalence}
\quad 
\begin{enumerate}
\item If there exists a weak ideal $(t,n)$ ramp scheme defined over a set of $v$ shares, then 
there exists an  $\mathrm{AOA}(s,t,n,v)$.
\item If there exists an  $\mathrm{AOA}(s,t,n,v)$, then there exists
a perfect ideal $(t,n)$ ramp scheme defined over a set of $v$ shares. 
\end{enumerate}
\end{Theorem}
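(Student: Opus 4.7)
Part 1 needs no new work: the proof of Theorem \ref{main.thm} uses only properties \textbf{(1)} and \textbf{(2)} of Definition \ref{weak.rs}, so it already shows that any weak ideal $(s,t,n)$ ramp scheme with $v$ shares yields an $\mathrm{AOA}(s,t,n,v)$. For part 2, the plan is to realize the rows of a given $\mathrm{AOA}(s,t,n,v)$ as distribution rules and then equip them with a suitable probability distribution. Take $\mathcal{K}=\mathcal{Y}$, and for each row $(x_1,\dots,x_n,y)$ define a distribution rule $d\in\mathcal{D}_y$ by $d(P_i)=x_i$. A short count from the AOA axioms shows that each $L\in\mathcal{K}$ appears in the last column of exactly $v^s$ rows, so $|\mathcal{D}_L|=v^s$. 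Then pick any probability distribution on $\mathcal{K}$ with $\mathbf{Pr}[\mathbf{K}=L]>0$ for all $L$, and set $\mathbf{Pr}[d]=\mathbf{Pr}[\mathbf{K}=L]/v^s$ for every $d\in\mathcal{D}_L$; these probabilities sum to $1$, and every rule is used with positive probability.

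It then remains to verify properties \textbf{(1)} and \textbf{(2*)}. Property \textbf{(1)} is immediate, because any $t$ of the first $n$ columns form an $\mathrm{OA}(t,n,v)$, so two rows that agree on $t$ positions are identical and hence have the same last coordinate. For \textbf{(2*)}, fix $\mathcal{P}_0$ with $|\mathcal{P}_0|\leq s$, a share-tuple $\mathbf{x}$ on $\mathcal{P}_0$, and a secret $L$. The AOA axiom that any $s$ of the first $n$ columns together with the last column contain each element of $\mathcal{X}^s\times\mathcal{Y}$ exactly once, marginalized over the extra $s-|\mathcal{P}_0|$ first columns used to make up a set of size $s$, gives exactly $v^{s-|\mathcal{P}_0|}$ rules $d\in\mathcal{D}_L$ with $d|_{\mathcal{P}_0}=\mathbf{x}$. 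Hence $\mathbf{Pr}[\mathbf{X}(\mathcal{P}_0)=\mathbf{x},\mathbf{K}=L] = v^{s-|\mathcal{P}_0|}\,\mathbf{Pr}[\mathbf{K}=L]/v^s = \mathbf{Pr}[\mathbf{K}=L]/v^{|\mathcal{P}_0|}$; summing over $L$ yields $\mathbf{Pr}[\mathbf{X}(\mathcal{P}_0)=\mathbf{x}]=v^{-|\mathcal{P}_0|}$, and dividing recovers $\mathbf{Pr}[\mathbf{K}=L\mid \mathbf{X}(\mathcal{P}_0)=\mathbf{x}]=\mathbf{Pr}[\mathbf{K}=L]$, which is exactly \textbf{(2*)}.

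The only delicate point is the marginalization step above: the AOA axiom is stated for \emph{exactly} $s$ of the first $n$ columns together with the last one, but extending it to fewer columns by summing over the freed coordinates is routine. I do not foresee any other obstacle.
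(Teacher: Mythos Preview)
Your proposal is correct and follows exactly the paper's approach: part 1 is Theorem \ref{main.thm}, and for part 2 the paper defines the same distribution rules and the same probabilities $\mathbf{Pr}[d]=\mathbf{Pr}[\mathbf{K}=L]/v^s$, then simply asserts ``it is straightforward to verify that the resulting ramp scheme is perfect.'' You have supplied that straightforward verification in full, including the marginalization from $s$ columns down to $|\mathcal{P}_0|\le s$, so your write-up is more detailed than the paper's but not different in substance.
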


We can now identify some parameter situations in which ideal ramp schemes exist, 
but  strong ideal ramp schemes do not exist.

\begin{Theorem}
Suppose $q$ is an odd  prime power and $3 \leq t \leq q$. Then there exists an ideal
$(1,t,q,q)$ ramp scheme but there does not exist 
a strong ideal $(1,t,q,q)$ ramp scheme.
\end{Theorem}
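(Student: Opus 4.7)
The plan is to assemble this directly from the equivalences and nonexistence results already established. First I would invoke Theorem \ref{nonexist1} with the stated hypotheses on $q$ and $t$ to obtain an AOA$(1,t,q,q)$, and simultaneously record that no OA$(t,q+t-1,q)$ exists in this parameter range (the Bush bound forces $k \leq q+t-2$ when $q$ is odd and $3 \leq t \leq q$).

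Next I would appeal to Theorem \ref{equivalence}, part 2, to convert the AOA$(1,t,q,q)$ into a perfect ideal $(1,t,q,q)$ ramp scheme with shares from a set of size $q$. This establishes existence of the ramp scheme.

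For the nonexistence of a strong ideal $(1,t,q,q)$ ramp scheme, I would argue by contradiction: if such a scheme existed, then Theorem \ref{equivalence-JM} would produce an OA$(t, q + t - 1, q)$ (substituting $s=1$, $n=q$, $v=q$ into the parameters $n+t-s$). But this contradicts the nonexistence half of Theorem \ref{nonexist1}.

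There is really no obstacle here; the theorem is essentially a corollary that bundles together Theorem \ref{nonexist1}, Theorem \ref{equivalence}, and Theorem \ref{equivalence-JM}. The only thing to be careful about is matching the parameter names (the fourth parameter of $(s,t,n,v)$ is the share alphabet size $v$, which equals $q$, and the third parameter $n$ equals $q$ as well, so the notation $(1,t,q,q)$ does unambiguously correspond to an AOA with $k=n=q$ and $v=q$).
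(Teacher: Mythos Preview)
Your proposal is correct and matches the paper's own proof, which simply states that the result is an immediate consequence of Theorems \ref{nonexist1}, \ref{equivalence-JM}, and \ref{equivalence}. You have spelled out exactly how these three results combine, including the parameter substitution $n+t-s = q+t-1$, so nothing further is needed.
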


\begin{proof}
This is an immediate consequence of Theorems \ref{nonexist1}, \ref{equivalence-JM} and \ref{equivalence}.
\end{proof}

\begin{Theorem}
Suppose $q$ is a  prime power and $s \leq q-1$. 
Then there exists an ideal
$(s,q+1,q+1,q)$ ramp scheme but there does not exist 
a strong ideal $(s,q+1,q+1,q)$ ramp scheme.
\end{Theorem}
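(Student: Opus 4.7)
The plan is to deduce this theorem from three earlier results in exactly the same way the immediately preceding theorem was derived: combining a non-existence result on orthogonal arrays with the two equivalences that characterize ideal and strong ideal ramp schemes.

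First, I would invoke Theorem \ref{nonexist2} with the given parameters. That theorem already does the combinatorial work: it asserts the existence of an AOA$(s,q+1,q+1,q)$ together with the non-existence of the orthogonal array OA$(q+1, 2(q+1)-s, q)$, the non-existence being a direct consequence of the Bush bound since $2(q+1)-s \geq q+3 > q+2$ when $s \leq q-1$.

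Next I would apply part (2) of Theorem \ref{equivalence} to the AOA$(s,q+1,q+1,q)$ guaranteed by the previous step. This produces a perfect ideal $(s,q+1,q+1)$ ramp scheme with shares from an alphabet of size $q$, yielding the positive half of the statement.

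Finally, for the negative half, I would argue by contradiction using Theorem \ref{equivalence-JM}: if a strong ideal $(s,q+1,q+1)$ ramp scheme with $q$ possible shares existed, then that theorem would guarantee an OA$(q+1, (q+1)+(q+1)-s, q) = $ OA$(q+1, 2(q+1)-s, q)$, directly contradicting Theorem \ref{nonexist2}. There is no real obstacle here; the entire content of the theorem has been packaged into the earlier results, and the proof is a one-line composition of Theorems \ref{nonexist2}, \ref{equivalence-JM}, and \ref{equivalence}, exactly parallel to the proof given for the preceding theorem.
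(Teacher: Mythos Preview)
Your proposal is correct and matches the paper's proof exactly: the paper's proof is the single sentence ``This follows from Theorems \ref{nonexist2}, \ref{equivalence-JM} and \ref{equivalence},'' and you have simply spelled out how those three results combine.
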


\begin{proof}
This follows from Theorems \ref{nonexist2}, \ref{equivalence-JM} and \ref{equivalence}.
\end{proof}

\section{Summary and Conclusion}
\label{summary.sec}

We showed various parameter situations where 
ideal ramp schemes exist but strong ideal ramp schemes do not exist. 
Our approach was to construct linear AOAs and show that the ``corresponding''
OAs (linear or not) do not exist. It would be easy to find additional
parameter sets for which linear AOAs exist but \emph{linear} OAs do not exist,
by making use of Theorem \ref{mainconj.thm} and other related results in the literature.

All the ramp schemes we constructed in this paper are linear, in the sense that they
are subspaces of vector spaces over a finite field. Constructions of ideal ramp schemes over
alphabets of non-prime power order would also be of interest.

\end{document}